\newtheorem{definition}{Definition}
\newtheorem{theorem}{Theorem}
\newtheorem{proposition}[theorem]{Proposition}
\newtheorem{remark}[theorem]{Remark}
\newtheorem{assumption}{Assumption}
\newcommand{\f}{\mathbb F}
\newcommand{\x}{\bf x}
\newcommand{\e}{\bf e}
\newcommand{\Ee}{\mathcal{E}}
\newcommand{\Ae}{\mathcal{A}}
\newcommand{\Setup}{\normalfont\textsf{Setup}}
\newcommand{\KeyGen}{\normalfont\textsf{KeyGen}}
\newcommand{\Encrypt}{\normalfont\textsf{Encrypt}}
\newcommand{\Decrypt}{\normalfont\textsf{Decrypt}}
\newcommand{\seck}{\lambda}
\newcommand{\param}{\normalfont\textsf{param}}
\newcommand{\pk}{\textsf{pk}}
\newcommand{\sk}{\textsf{sk}}
\newcommand{\INDCPA}{\ensuremath{\normalfont\textsf{IND\text{-}CPA}}\xspace}
\newcommand{\Adv}{\mathbf{Adv}}
\newcommand{\Exp}{\mathbf{Exp}}
\newcommand{\ind}{\normalfont\textsf{ind}}
\newcommand{\sets}{\leftarrow}
\newcommand{\FIND}{\texttt{FIND}}
\newcommand{\GUESS}{\texttt{GUESS}}
\newcommand{\comreturn}{\texttt{RETURN\ }}
\newcommand{\DQCSD}{\textsf{DQCSD}\xspace}
\newcommand{\QCSD}{\textsf{QCSD}\xspace}
\newcommand{\SD}{\textsf{SD}\xspace}
\newcommand{\DSD}{\textsf{DSD}\xspace}
\newcommand{\RSD}{\textsf{RSD}\xspace}
\newcommand{\DRSD}{\textsf{DRSD}\xspace}
\newcommand{\RQCSD}{\textsf{RQCSD}\xspace}
\newcommand{\DRQCSD}{\textsf{DRQCSD}\xspace}
\newcommand\carlos[1]{\textbf{\color{magenta}{\bf Carlos:} #1\color{black}}}
\newcommand\jc[1]{\textbf{\color{vert}{\bf JC:} #1\color{black}}}
\newcommand\philippe[1]{\textbf{\color{orange}{\bf Philippe:} #1\color{black}}}
\newcommand\ob[1]{\textbf{\color{blue}{\bf OB:} #1\color{black}}}
\renewcommand\carlos[1]{}
\renewcommand\jc[1]{}
\renewcommand\philippe[1]{}
\renewcommand\ob[1]{}
\renewcommand\marginnote[1]{}
\definecolor{vert}{rgb}{0,0.546875,0}
\newcommand{\lar}{\stackrel{\mathdollar}{\leftarrow}}
\definecolor{orange}{rgb}{1,.5,.0}
\newcommand\ul[1]{\textcolor{blue}{#1}}
\let\oldnl\nl
\newcommand{\nonl}{\renewcommand{\nl}{\let\nl\oldnl}}
\begin{document}

\title{Efficient Encryption from Random Quasi-Cyclic Codes}

\author{Carlos Aguilar, Olivier Blazy, Jean-Christophe Deneuville,\\
  Philippe Gaborit and Gilles Z\'emor}

\sloppy
\maketitle
\pagestyle{plain}
%

\begin{abstract}
We propose a framework for constructing efficient code-based encryption schemes
from codes that do not hide any structure in their public matrix.
The framework is in the spirit of the schemes first proposed by
Alekhnovich in 2003 and based on the difficulty of decoding random
linear codes from random errors of low weight. We depart somewhat from Aleknovich's
approach and propose an encryption scheme based on the difficulty of
decoding random quasi-cyclic codes.
We propose two new cryptosystems instantiated within our framework: the
Hamming Quasi-Cyclic cryptosystem (HQC), based on the Hamming metric, and
the Rank Quasi-Cyclic cryptosystem (RQC), based on the rank metric. 
We give a security proof, which reduces the IND-CPA security of our
systems to a decisional version of the well known
problem of  decoding random families of quasi-cyclic codes for the Hamming  
and rank metrics (the respective QCSD and RQCSD problems).
 We also provide an analysis of the decryption 
failure probability of our scheme in the Hamming metric case: for the
rank metric there is no decryption failure.
Our schemes benefit from a very fast decryption algorithm together with small key sizes of only 
a few thousand bits. The cryptosystems are very efficient for low
encryption rates and are very 
well suited to key exchange and authentication. 
Asymptotically, for $\lambda$ the security parameter, the public key
sizes are respectively
in $\mathcal{O}({\lambda}^2)$ for HQC and in $\mathcal{O}(\lambda^{\frac{4}{3}})$ for RQC.
Practical parameter compares well to systems based on ring-LPN or the recent MDPC system. 
\end{abstract}

\vspace{1.25cm}

\begin{IEEEkeywords}
Code-based Cryptography, Public-Key Encryption, Post-Quantum Cryptography, Provable Security
\end{IEEEkeywords}
\section{Introduction}
\subsection{Background and Motivation}
The first code-based cryptosystem was proposed by McEliece in 1978. This system, which can be seen as a general encryption setting for coding theory,
is based on a hidden trapdoor associated to a decodable
family of codes, hence a strongly structured family of codes. The inherent construction of the system makes it difficult to formally reduce security to the generic difficulty of
decoding random codes. Even if the original McEliece cryptosystem, based on the family of Goppa codes, is still considered secure today, many variants
based on alternative families of codes (Reed-Solomon codes, Reed-Muller codes or some alternant codes \cite{SAC:MisBar09,AFRICACRYPT:BCGO09}) were broken by recovering in polynomial time the hidden structure \cite{EC:FOPT10}. 
The fact that the hidden code structure may be uncovered (even possibly for Goppa codes) lies like a sword of Damocles over the system, and
finding a practical alternative cryptosystem based on the difficulty of decoding unstructured
or random codes has always been a major issue in code-based cryptography.
The recently proposed MDPC cryptosystem \cite{MTSB13} (somewhat in the spirit of the NTRU cryptosystem \cite{HPS98}) addresses the problem by using a hidden code structure which is significantly weaker than
that of previously used algebraic codes like Goppa codes. The cryptosystem
\cite{GMRZ13} followed this trend with a similar approach.
Beside this weak hidden structure, the MDPC system has very nice features and in particular relatively small key sizes,
because of the cyclic structure of the public matrix. However,
even if this system is a strong step forward for code-based cryptography, the hidden structure issue has not altogether disappeared. 

In 2003, Alekhnovich proposed an innovative approach based on the difficulty of decoding
purely random codes~\cite{FOCS:Alekhnovich03}. In this system the trapdoor (or secret key) is a random error vector that has been added to a random codeword of a random code.
Recovering the secret key is therefore equivalent to solving the problem of
decoding a random code -- with no hidden structure. Alekhnovich also proved that breaking the system in any way, not necessarily by recovering the secret key, involves decoding a random linear code.

Even if the system was not totally practical, the approach in itself
was a breakthrough for code-based cryptography. Its inspiration was
provided in part by the Ajtai-Dwork cryptosystem \cite{AD97} which is based on solving hard lattice problems. 
The Ajtai-Dwork cryptosystem also inspired the Learning With Errors (LWE) lattice-based cryptosystem by Regev \cite{STOC:Regev03} which generated a huge amount of work in lattice-based cryptography. Attempts to emulate this approach in code-based cryptography were also made and systems based on the Learning Parity with Noise (LPN) have been proposed by exploiting the analogy with LWE \cite{DV13,PKC:KilMasPie14}: the LPN problem is essentially the problem of decoding random linear codes of fixed dimension and unspecified length over a binary symmetric channel. 
The first version of the LWE cryptosystem was not very efficient, but
introducing more structure in the public
key (as for NTRU) lead to the very efficient Ring-LWE cryptosystem \cite{EC:LyuPeiReg10}. One strong feature of this last paper is that 
it gives a reduction from the decisional version of the ring-LWE problem to a search version of the problem. Such a reduction is not known for the case of the ring-LPN problem. A ring version (ring-LPN) was nevertheless introduced in \cite{Lapin12} for authentication and
for encryption in \cite{damgaard2012public}. 

In this paper, we propose an efficient cryptosystem based on the difficulty of decoding random quasi-cyclic codes. It is inspired by Ring-LWE encryption but is significantly adapted to the coding theory setting.
Our construction benefits from some nice features: a reduction to a decisional version of the general problem of decoding random quasi-cyclic codes, hence 
with no hidden structure, and also quite good parameters and efficiency. 
Since our approach is relatively general, it can also
be used with other metrics such as the rank metric. Finally, another strong feature of our approach is that inherently it leads to a precise analysis of the decryption failure probability, which is also a hard point for the MDPC cryptosystem
and is not done in detail for other approaches based on the LPN problem.
A relative weakness of our system is its relatively low encryption rate, but this is not a major
issue for classical applications of public-key encryption schemes such as authentication or key exchange.

\subsection{Our Contributions}

We propose the first efficient code-based cryptosystem whose security relies on decoding small weight vectors 
of random quasi-cyclic codes. We provide a reduction of our cryptosystem to this problem together with a detailed analysis 
of the decryption failure probability. Our analysis allows us to give small parameters for code-based encryption in Hamming and Rank metrics. 
When compared to the MDPC~\cite{MTSB13} or LRPC~\cite{GMRZ13}
cryptosystems, our proposal offers higher security (in terms of
  security bits) and better decryption guarantees for similar
parameters (i.e. key and communication size), but with a lower encryption rate. Overall we propose concrete parameters for different 
levels of security, in both the classical and quantum settings. These parameters show the great potential of rank metric for cryptography
especially for higher security settings.
When compared to the ring-LPN based cryptosystem \cite{damgaard2012public}
our system has better parameters with factors $10$ and $100$ respectively for the size of the ciphertext and the size of the public key. 
We also give a general table comparing the different asymptotic sizes for different code-based cryptosystems.

\subsection{Overview of Our Techniques}
\label{sec:1-3-overview}
Our cryptosystem is based on two codes. A first code $\mathcal{C}[n,k]$, for which an efficient decoding algorithm 
$\mathcal{C}.\textsf{Decode}(\cdot)$ is known.
The code $\mathcal{C}$ together with its generator matrix $\mathbf{G}$ 
are publicly known.
The second code is a $[2n,n]$ random double-circulant code in systematic form, with generator matrix $\mathbf{Q} = (\mathbf{I}_n~|~\textbf{rot}(\mathbf{q}_r))$ (see Eq. (\ref{eq:rot}) for the definition of \textbf{rot}($\cdot$)). The general idea of the system is that the double-circulant code is used to generate some noise, which can be handled 
and decoded by the code $\mathcal{C}$. The system can be seen as a noisy adaptation of the ElGamal cryptosystem.

The secret key for our cryptosystem is a \emph{short}
vector $\mathsf{sk} = \mathbf{(x, y)}$ (for some metric), whose syndrome 
$\mathbf{s}^\top = \mathbf{Q(x, y)}^\top$ is appended to the public key 
$\mathsf{pk} = (\mathbf{G, Q, s}^\top)$. To encrypt a message $\bm{\mu}$ belonging 
to some plaintext space, it is first encoded through the generator matrix $\mathbf{G}$, 
then hidden using the syndrome $\mathbf{s}$ and an additional short vector $\bm{\epsilon}$ 
to prevent information leakage. In other words, encrypting a message simply consists in 
providing a noisy encoding of it with a particular shape. Formally, the ciphertext is $(\mathbf{v=rQ}^\top,  \bm{\rho})$, for a short random vector $\mathbf{r = (r}_1, \mathbf{r}_2)$ and  
$\bm{\rho} = \bm{\mu}\mathbf{G} + \mathbf{s\cdot r}_2 + \bm{\epsilon}$ for some natural operator $\cdot$ 
defined in Sec.~\ref{sec:preliminaries}. 
The legitimate recipient can obtain a noisy version of the plaintext $\bm{\rho} - \mathbf{v\cdot y}$ using his secret key $\textsf{sk} = \mathbf{(x,y)}$ and then recover the (noiseless) plaintext using the
efficient decoding algorithm $\mathcal{C}.\textsf{Decode}$.

For correctness, all previous constructions based on a McEliece
approach rely on the fact that the error term 
added to the encoding of the message is less than or equal to the decoding capability 
of the code being used. In our construction, this assumption is no longer required 
and the correctness of our cryptosystem is guaranteed assuming the legitimate recipient 
can remove sufficiently many errors from the noisy encoding $\bm{\rho}$ of the message 
using \textsf{sk}. 

The above discussion leads to the study of the probability that a decoding error occurs, 
which would yield a decryption failure. We study the typical weight of the
error vector $\e$ that one needs to decode in order to decrypt 
(see Sec.~\ref{sec:5-analysis} for details). With
the reasonable assumption, backed up by simulations, that the weight
of $\e$ behaves in a way that is close to a binomial distribution, we
manage a precise estimation of a decoding failure and hence calibrate coding
parameters accordingly.

\smallskip\noindent{\bf Comparison with the McEliece framework. } In the McEliece encryption framework, a hidden code is considered.
This leads to two important consequences: first, the security depends on hiding the structure of the code, and second,
the decryption algorithm consists of decoding the hidden code which
cannot be changed. This yields different instantiations 
depending on the choice of the hidden code, many of which succumb to
attacks and few of which resist.

In our framework there is not one unique hidden code, but two independent codes:  
the random double-circulant structure guarantees the security of the scheme, and the public code
$\mathcal{C}$ guarantees correct decryption. It makes it possible to
consider public families of codes 
which are difficult to hide but very efficient for decoding: also it requires        
finding a tradeoff for the code $\mathcal{C}$, between decoding
efficiency and practical decoding complexity.
But unlike the McEliece scheme, where the decryption code is fixed, it can be changed
depending on the application.

The global decryption failure for our scheme depends on the articulation between the error-vector distribution induced
by the double-circulant code
and the decoding algorithm $\mathcal{C}.\textsf{Decode}(\cdot)$. After having studied the error-vector distribution
for the Hamming metric we associate it with a particular code adapted
to low rates  and bit error probability of order $1/3$. 
Notice that the system could possibly be used for greater encryption rate at the cost of higher parameters.
This led us to choose tensor product codes, the composition of two linear codes. 
Tensor product codes are defined (Def.~\ref{def:tensor}) in Sec.~\ref{sec:6-low-rate}, and 
a detailed analysis of the decryption failure probability for such codes is provided there.
For the rank metric case, we consider Gabidulin codes and the case when the error-vector is always decodable,
with zero decryption failure probability.

\smallskip\noindent{\bf Comparison with the LWE/LPN approach.}
Our scheme may be considered as a special instance
of the general LWE/LPN
methodology, as described, for example, in the recent paper
\cite {bSBDIRZ16}. As is mentioned there, even though full LWE-based
schemes may, given current knowledge, be asymptotically more efficient
than their LPN counterparts, there is still significant appeal in
providing a workable variation over the more simple binary field
(as it was done with Ring-LWE for the LWE setting).
This was previously attempted in \cite{damgaard2012public} by relying on the
Ring-LPN problem. One of the drawbacks of this last work is to be
limited to rings
of the form $\f_2[X]/(P(X))$ that are extension fields of $\f_2$. In
contrast, we suggest using $\f_q[X]/(X^n-1)$, which reduces security
to a decoding problem for quasi-cyclic codes and draws upon Coding
Theory's experience of using this family of codes.
Quasi-cyclic codes
have indeed been studied for a long time by coding-theorists, and many of the
records for minimum distance are held by quasi-cyclic codes. However,
no efficient generic decoding algorithm for quasi-cyclic codes has
been found, lending faith to the assumption that decoding random
quasi-cyclic codes is a hard algorithmic problem.
Also, this particular setting
also allows us to obtain very good parameters compared to the approach of \cite{damgaard2012public} with at least
a factor 10 for the size of the keys and messages
Departing from the strict
LWE/LPN paradigm also enabled us to derive a security reduction to decoding
quasi-cyclic codes and arguably gives us more flexibility for the
error model. Notably the rank-metric variation that we introduce has
not been investigated before in the LWE/LPN setting, and looks very
promising. As mentioned before, one of its features is that it enables
a zero error probability of incorrect decryption.

\subsection{Road Map}

The rest of the paper is organized as follows: Sec.~\ref{sec:preliminaries} gives necessary background on coding theory
for Hamming and Rank metrics. Sec.~\ref{sec:scheme} describes the cryptosystem we propose and its security is discussed in Sec.~\ref{sec:secu}. Sec.~\ref{sec:5-analysis} and~\ref{sec:6-low-rate} study
the decryption failure probability and the family of tensor product codes we consider to perform
the decoding for small rate codes. Finally, Sec.~\ref{sec:7-parameters} give parameters.

\section{Preliminaries}
\label{sec:preliminaries}

\subsection{General Definitions}
\noindent{\bf Notation. } Throughout this paper, $\mathbb{Z}$ denotes the ring of integers, $\mathbb{F}$ denotes a finite (hence commutative) field, typically $\mathbb{F}_q$ for a prime $q \in \mathbb{Z}$ for Hamming codes or $\mathbb{F}_{q^m}$ for Rank Metric codes. $\mathcal{V}$ is a vector space of dimension $n$ over $\mathbb{F}$ for some positive $n \in \mathbb{Z}$. Elements of $\mathcal{V}$ will be represented by lower-case bold letters, and interchangeably considered as row vectors or polynomials in $\mathcal{R}=\mathbb{F}[X]/(X^{n}-1)$. By extension $\mathcal{R}_q$ and $\mathcal{R}_{q^m}$ will denote the latter ring when the base field is $\mathbb{F}_q$ or $\mathbb{F}_{q^m}$ instead of $\mathbb{F}$, respectively. Matrices will be represented by upper-case bold letters. 

For any two elements $\mathbf{x, y} \in \mathcal{V}$, we define their product similarly as in $\mathcal{R}$, \emph{i.e.} $\mathbf{x}\cdot\mathbf{y} = \mathbf{c} \in \mathcal{V}$ with 
\begin{equation}
\label{eq:product}
c_k = \sum_{i+j\equiv k \mod~n} x_i y_j\textnormal{, for }k \in \{0,1, \ldots, n-1\}.
\end{equation}
Notice that as the product of two elements over the \emph{commutative} ring $\mathcal{R}$, we have $\mathbf{x}\cdot\mathbf{y} = \mathbf{y}\cdot\mathbf{x}$.

For any finite set $\mathcal{S}$, $x \lar \mathcal{S}$ denotes a uniformly random element sampled from $\mathcal{S}$. For any $x \in \mathbb{R}$, let $\lfloor x \rfloor$ denotes the biggest integer smaller than (or equal to) $x$. Finally, all logarithms $\log(\cdot)$ will be base-$2$ unless explicitly mentioned. For a probability distribution $\mathcal{D}$, we denote by $X \sim \mathcal{D}$ the fact that $X$ is a random variable following $\mathcal{D}$.

\begin{definition}[Circulant Matrix]
Let $\mathbf{x} = \left(x_1, \ldots, x_n\right) \in \mathbb{F}^n$. The \emph{circulant matrix} induced by $\mathbf{x}$ is defined and denoted as follows:
\begin{equation}\label{eq:rot}
\textnormal{\textbf{rot}}(\mathbf{x}) = \begin{pmatrix}
x_1 & x_n & \ldots & x_2\\
x_2 & x_1 & \ldots & x_3\\
\vdots & \vdots & \ddots & \vdots\\
x_n & x_{n-1} & \ldots & x_1\\
\end{pmatrix} \in \mathbb{F}^{n \times n}
\end{equation}
\end{definition}

As a consequence, it is easy to see that the product of any two elements $\mathbf{x, y} \in \mathcal{V}$ can be expressed as a usual vector-matrix (or matrix-vector) product using the \textbf{rot}$(\cdot)$ operator as 
\begin{equation}
\mathbf{x}\cdot\mathbf{y} = \mathbf{x}.\textnormal{\textbf{rot}}(\mathbf{y})^\top = \left(\textnormal{\textbf{rot}}(\mathbf{x})\mathbf{y}^\top\right)^\top = \mathbf{y}.\textnormal{\textbf{rot}}(\mathbf{x})^\top = \mathbf{y}\cdot\mathbf{x}.
\end{equation}

\bigskip\noindent{\bf Coding Theory. } We now turn to recall some basic definitions and properties relating to coding theory that will be useful to our construction. We mainly focus on generic definitions, and refer the reader to Sec.~\ref{sec:2-2-metrics} for instantiations with a specific metric, and also, to~\cite{background} for a complete survey on Code-based Cryptography due to space restrictions.

\philippe{Objectif $\rightarrow$ enlever des défs, et les remplacer par des paragraphes.}
\begin{definition}[Linear Code]
A \emph{Linear Code} $\mathcal{C}$ of length $n$ and dimension $k$ (denoted $[n,k]$) is a subspace of $\mathcal{V}$ of dimension $k$. 
Elements of $\mathcal{C}$ are referred to as codewords.
\end{definition}

\begin{definition}[Generator Matrix]
\label{def:GM}
We say that $\mathbf{G} \in \mathbb{F}^{k \times n}$ is a \emph{Generator Matrix} for the $[n, k]$ code $\mathcal{C}$ if \begin{equation}
\mathcal{C} = \left\lbrace \bm{\mu}\mathbf{G}\textnormal{, for }\bm{\mu} \in \mathbb{F}^{k}\right\rbrace.
\end{equation}
\end{definition}

\begin{definition}[Parity-Check Matrix]
\label{def:PCM}
Given an $[n, k]$ code $\mathcal{C}$, we say that $\mathbf{H} \in \mathbb{F}^{(n-k) \times n}$ is a \emph{Parity-Check Matrix} for $\mathcal{C}$ if $\mathbf{H}$ is a generator matrix of the dual code $\mathcal{C}^\perp$, or more formally, if
\begin{equation}
\mathcal{C}^\perp = \left\lbrace \mathbf{x}\in \mathbb{F}^{n}\textnormal{ such that }\sigma(\mathbf{x}) = \mathbf{0}\right\rbrace.
\end{equation}
where
$$\sigma(\mathbf{x}) = \mathbf{Hx}^\top$$
denotes the {\em syndrome} of $\mathbf{x}$.
\end{definition}
\begin{definition}[Minimum Distance]
Let $\mathcal{C}$ be an $[n,k]$ linear code over $\mathcal{V}$ and let $\omega$ be a norm on $\mathcal{V}$. The \emph{Minimal Distance} of $\mathcal{C}$ is
\begin{equation}
d = \min_{\mathbf{x, y} \in \mathcal{C}, \mathbf{x \neq y}} \omega(\mathbf{x-y}).
\end{equation}
\end{definition}

A code with minimum distance $d$ is capable of decoding arbitrary patterns of up to 
$\delta = \lfloor \frac{d-1}{2} \rfloor$ errors. Code parameters are written
denoted $[n, k, d]$.

Code-based cryptography usually suffers from huge keys. In order to keep our cryptosystem efficient, we will use the strategy of Gaborit~\cite{cyclic} for shortening keys. This results in Quasi-Cyclic Codes, as defined below.

\begin{definition}[Quasi-Cyclic Codes~\cite{MTSB13}]
\label{def:QC}
View a vector $\mathbf{x}=(\mathbf{x}_1,\ldots ,\mathbf{x}_s)$ of $\f_2^{sn}$ as 
$s$ successive blocks ($n$-tuples). 
An $[sn, k, d]$ linear code $\mathcal{C}$ is \emph{Quasi-Cyclic (QC)} of order $s$ if, for any $\mathbf{c}=(\mathbf{c}_1,\ldots ,\mathbf{c}_s) \in \mathcal{C}$, the vector obtained after applying a simultaneous circular shift to every block 
$\mathbf{c}_1,\ldots ,\mathbf{c}_s$
 is also a codeword. 

More formally, by considering each block $\mathbf{c}_i$ as a
polynomial in $\mathcal{R} = \mathbb{F}[X]/(X^n-1)$, the code
$\mathcal{C}$ is QC of order $s$ if for any
$\mathbf{c}=(\mathbf{c}_1,\ldots ,\mathbf{c}_s) \in \mathcal{C}$ it
holds that $(X\cdot\mathbf{c}_1,\ldots , X\cdot\mathbf{c}_s) \in
\mathcal{C}$.
\end{definition}

\begin{definition}[Systematic Quasi-Cyclic Codes]
\label{def:QCsystematic}
A {\em systematic} Quasi-Cyclic $[sn,(s-\ell)n]$ code of order $s$ is a quasi-cyclic code
with a parity-check matrix of the form:
\begin{equation}
  \label{eq:systematic}
  \mathbf{H}=
\begin{bmatrix}
  \mathbf{I}_n & 0   &  \cdots &0& \mathbf{A}_1\\
  0   & \mathbf{I}_n &        && \mathbf{A}_2\\
      &     &  \ddots && \vdots\\
  0   &     &   \cdots & \mathbf{I}_n     & \mathbf{A}_{\ell}
\end{bmatrix}
\end{equation}
where $\mathbf{A}_1,\ldots ,\mathbf{A}_\ell$ are circulant $n\times n$ matrices.
\end{definition}

\subsection{Different Types of Metric}
\label{sec:2-2-metrics}
The previous definitions are generic and can be adapted to any type of metric.

Besides the well known Hamming metric, 
we also consider, in this paper, the rank metric which has interesting properties for cryptography.

We recall some definitions and properties of Rank Metric Codes, and refer the reader to~\cite{L06} for more details. Consider the case where $\mathbb{F}$ is an extension of a finite field, \emph{i.e.} $\mathbb{F = F}_{q^m}$, and let $\mathbf{x} = (x_1, \ldots, x_n) \in \mathbb{F}_{q^m}^n$ be an element of some vector space $\mathcal{V}$ of dimension $n$ over $\mathbb{F}_{q^m}$. A basic property of field extensions is that they can be seen as vector spaces over the base field they extend. Hence, by considering $\mathbb{F}_{q^m}$ as a vector space of dimension $m$ over $\mathbb{F}_q$, and given a basis $\left(\mathbf{e_1, \ldots, e_m}\right)$ $\in \mathbb{F}_q^m$, one can express each $x_i$ as
\begin{eqnarray}
x_i = \sum_{j=1}^m x_{j,i}\mathbf{e_j} \textnormal{ (or equivalently } x_i = \left(x_{1,i}, \ldots, x_{m, i}\right)\textnormal{).}
\end{eqnarray}
Using such an expression, we can expand $\mathbf{x} \in \mathbb{F}_{q^m}^n$ to a matrix $\mathbf{E(x)}$ such that:
\begin{eqnarray}
\mathbf{x} &=& ~\begin{pmatrix}~x_1~~ & ~x_2 & ~\ldots & ~x_n~~\end{pmatrix} \in \mathbb{F}_{q^m}^n\\
\label{eq:rank_matrix}\mathbf{E(x)}&=& \begin{pmatrix}
x_{1, 1} & x_{1, 2} & \ldots & x_{1, n} \\
x_{2, 1} & x_{2, 2} & \ldots & x_{2, n} \\
\vdots & \vdots & \ddots & \vdots \\
x_{m, 1} & x_{m, 2} & \ldots & x_{m, n}
\end{pmatrix} \in \mathbb{F}_q^{m \times n}.
\end{eqnarray}

The definitions usually associated to Hamming metric codes such as norm (Hamming weight), support (non-zero coordinates), and isometries ($n\times n$ permutation matrices) can be adapted to the Rank metric setting based on the representation of elements as matrices in $F_{q}^{m\times n}$.

For an element $\mathbf{x}$ of $\mathbb{F}_{q^m}^n$ we define its rank norm $\omega(\mathbf{x})$ as the rank of the matrix $\mathbf{E(x)}$. A rank metric code $\mathcal{C}$ of length $n$ and dimension $k$ over the field $\mathbb{F}_{q^m}$ is a subspace of dimension $k$ of $\mathbb{F}_{q^m}^n$ embedded with the rank norm. In the following, $\mathcal{C}$ is a rank metric code of length $n$ and dimension $k$ over
$\mathbb{F}_{q^m}$, where $q = p^\eta$ for some prime $p$ and positive $\eta \geq 1$. The matrix $\mathbf{G}$ denotes a $k \times n$ generator matrix of $\mathcal{C}$ and $\mathbf{H}$ is one of its parity check matrices.
The minimum rank distance of the code $\mathcal{C}$ is the minimum rank of non-zero
vectors of the code. We also considers the usual inner product which allows
to define the notion of dual code.

Let $\mathbf{x}=\left(x_1,x_2,\cdots,x_n\right) \in \mathbb{F}_{q^m}^n$ be a vector of rank $r$. We denote by $E = \langle x_1,\ldots, x_n \rangle$ the
$\mathbb{F}_{q}$-subspace of $\mathbb{F}_{q^m}$ generated by the coordinates of $\mathbf{x}$ \emph{i.e.} $E = \mathrm{Vect}\left(x_1,\ldots,x_n \right)$. The vector space $E$ is called the \emph{support} of $\mathbf{x}$ and denoted $\mathrm{Supp}(\mathbf{x})$.
Finally, the notion of \emph{isometry} which in Hamming metric corresponds to the action of the code on $n \times n$ permutation matrices, is replaced for the Rank metric by the action of $n \times n$ invertible matrices
over the base field $\mathbb{F}_q$.

\smallskip\noindent{\bf Bounds for Rank Metric Codes. } The classical bounds for Hamming metric have straightforward rank metric analogues.

\smallskip\noindent{\bf Singleton Bound. } The classical Singleton bound for linear $[n,k]$ codes of minimum rank
$r$ over $\mathbb{F}_{q^m}$ applies naturally in the Rank metric setting. It works in the same way as for linear codes (by finding an information set) and reads $r \le 1+n-k$.
When $n > m$ this bound can be rewritten~\cite{L06} as
\begin{equation}
r \le 1+ \left\lfloor \frac{(n-k)m}{n} \right\rfloor.
\end{equation}
Codes achieving this bound are called Maximum Rank Distance codes (MRD).

\medskip\noindent{\bf Deterministic Decoding. } Unlike the situation for the Hamming metric, there do not exist many families of codes for the rank metric which are able to decode rank errors efficiently up to a given norm. When we are dealing with deterministic decoding,
there is essentially only one known family of rank codes which can decode efficiently: the family of Gabidulin codes~\cite{G85}. These codes are an analogue of Reed-Solomon
codes~\cite{RS60} where polynomials are replaced by $q$-polynomials. These codes are defined over $\mathbb{F}_{q^m}$
and for $k\le n \le m$, Gabidulin codes of length $n$ and dimension $k$ are optimal and satisfy the
Singleton bound for $m=n$ with minimum distance $d=n-k+1$. They can decode
up to $\lfloor\frac{n-k}{2}\rfloor$ rank errors in a deterministic way.

\medskip\noindent{\bf Probabilistic Decoding. } There also exists a simple family of codes
which has been described for the subspace metric in~\cite{SKK10} and can be straightforwardly adapted
to rank metric. These codes reach asymptotically the equivalent of the Gilbert-Varshamov bound for the rank metric, however their non-zero probability of decoding failure makes them less interesting
for the cases we consider in this paper.

\subsection{Difficult Problems for Cryptography}

In this section we describe difficult problems which can be used for cryptography.
We give generic definitions for these problems which are usually instantiated with the Hamming metric
but can also be instantiated with the rank metric. After defining the problems
we discuss their complexity.

All problems are variants of the {\em decoding problem,} which consists of
looking for the closest codeword to a given vector: when dealing with linear codes, it is readily seen that the decoding problem stays the same when one is given the {\em syndrome} of the received vector rather than the received vector. We therefore speak of {\em Syndrome Decoding} (SD).

\begin{definition}[SD Distribution]
For positive integers, $n$, $k$, and $w$, the \emph{\textsf{SD}($n, k, w$) Distribution} chooses $\mathbf{H} \lar \mathbb{F}^{(n-k) \times n}$ and $\mathbf{x} \lar \mathbb{F}^{n}$ such that $\omega(\mathbf{x})=w$, and outputs $(\mathbf{H, \sigma(\x)=Hx}^\top)$.
\end{definition}

\begin{definition}[Search SD Problem]
\label{def:SD}
Let $\omega$ be a norm over $\mathcal{V}$. On input $(\mathbf{H, y}^\top) \in \mathbb{F}^{(n-k)\times n}\times\mathbb{F}^{(n-k)}$ from the \SD distribution, the \emph{Syndrome Decoding Problem \textsf{SD}$(n, k, w)$} asks to find $\mathbf{x} \in \mathbb{F}^n$ such that $\mathbf{Hx}^\top = \mathbf{y}^\top$ and $\omega(\mathbf{x}) = w$.
\end{definition}

Depending on the metric the above problem is instantiated with, we denote it either by \SD for the Hamming metric or by Rank-\textsf{SD} (\RSD) for the Rank metric.

For the Hamming distance the \SD problem has been proven to be \textsf{NP}-complete in~\cite{BMvT78}. This problem can also be seen as the Learning Parity with Noise (\textsf{LPN}) problem with a fixed number of samples~\cite{C:AppIshKus07}. The \RSD problem has recently
been proven difficult with a probabilistic reduction to the Hamming setting in~\cite{GZ14}. For cryptography 
we also need a Decisional version of the problem, which is given in the following Definition:

\begin{definition}[Decisional SD Problem]
\label{def:DSD}
On input $(\mathbf{H, y}^\top) \lar \mathbb{F}^{(n-k)\times n}\times\mathbb{F}^{(n-k)}$, the \emph{Decisional \SD Problem \textsf{DSD}($n, k, w$)} asks to decide with non-negligible advantage whether $(\mathbf{H, y}^\top)$ came from the \textnormal{\textsf{SD}}($n, k, w$) distribution or the uniform distribution over $\mathbb{F}^{(n-k)\times n}\times\mathbb{F}^{(n-k)}$.
\end{definition}

As mentioned above, this problem is the problem of decoding random linear codes from random errors. The random errors are often taken as independent
Bernoulli variables acting independently on vector coordinates, rather
than uniformly chosen from the set of errors of a given weight, but this
hardly makes any difference and one model rather than the other is a question of convenience.
The \DSD 
problem has been shown to be polynomially equivalent to its
search version in~\cite{C:AppIshKus07}. The rank metric version of the problem is denoted by \DRSD, by applying the transformation
described in~\cite{GZ14} it can be shown that the problem can be reduced to a search problem for the Hamming metric. Hence even if the reduction
is not optimal, it nevertheless shows the hardness of the problem.



Finally, as for both metrics our cryptosystem will use QC-codes, we explicitly define the problem on which our cryptosystem will rely. The following Definitions describe the \DSD problem in the QC configuration, and are just a combination of Def.~\ref{def:QC} and~\ref{def:DSD}.
Quasi-Cyclic codes are very useful in cryptography since their compact description allows to decrease considerably the size of the keys. In particular the case
$s=2$ corresponds to double circulant codes with generator
matrices of the form $(\mathbf{I}_n~|~\mathbf{A})$ for $\mathbf{A}$ a circulant matrix. Such double circulant
codes have been used for almost 10 years in cryptography (cf~\cite{GG07}) and more
recently in~\cite{MTSB13}. Quasi-cyclic codes of order 3 are also considered in~\cite{MTSB13}.

\begin{definition}[$s$-QCSD Distribution]
\label{def:QCSD-distro}
For positive integers $n$, $k$, $w$ and $s$, the \emph{$s$-\textsf{QCSD}($n, k, w, s$) Distribution} chooses uniformly at random a parity matrix $\mathbf{H} \lar \mathbb{F}^{(sn-k) \times sn}$ of a systematic QC code
$\mathcal{C}$ of order $s$ (see Definition~\ref{def:QCsystematic})  together with a vector $\mathbf{x}=(\mathbf{x}_1,\ldots , \mathbf{x}_s) \lar \mathbb{F}^{sn}$ such that $\omega(\mathbf{x}_i)=w$, $i=1..s$, and outputs $(\mathbf{H, Hx}^\top)$.
\end{definition}
\begin{definition}[(Search) $s$-QCSD Problem]
\label{def:SQCSD}
For positive integers $n$, $k$, $w$, $s$, a random parity check matrix $\mathbf{H}$ of a systematic QC code $\mathcal{C}$ and $\mathbf{y} \lar \mathbb{F}^{sn-k}$, the \emph{Search $s$-Quasi-Cyclic \SD Problem $s$-\textsf{QCSD}($n, k, w$)} asks to find 
$\mathbf{x}=(\mathbf{x}_1,\ldots , \mathbf{x}_s) \in \mathbb{F}^{sn}$ such that $\omega(\mathbf{x}_i) =w$, $i=1..s$, and $\mathbf{y} = \mathbf{xH}^\top$.
\end{definition}

It would be somewhat more natural to choose the parity-check matrix $\mathbf{H}$ to be made up of independent uniformly random circulant submatrices, rather than with the special form required by \eqref{eq:systematic}. We choose this distribution so as to make the
security reduction to follow less technical. It is readily seen that, for fixed $s$, 
when choosing quasi-cyclic codes with this more general distribution,
one obtains with non-negligeable probability, 
a quasi-cyclic code that admits a parity-check matrix of the form \eqref{eq:systematic}. Therefore requiring quasi-cyclic codes to be systematic does not  hurt the generality of the decoding problem for quasi-cyclic codes.
A similar remark holds for the slightly special form of weight distribution of the vector $\mathbf{x}$.

\begin{assumption}
Although there is no general complexity result for quasi-cyclic codes,
decoding these codes is considered hard by the community.
There exist general attacks which uses the cyclic structure of the code
\cite{S11,HT15} but these attacks have only a very limited impact on
the practical complexity of the problem. The conclusion is that in practice, the best
attacks are the same as those for non-circulant codes up to a small
factor. \philippe{papier JPT.}
\end{assumption}

The problem has a decisional form:
\begin{definition}[Decisional $s$-QCSD Problem]
\label{def:DQCSD}
For positive integers $n$, $k$, $w$, $s$, a random parity check matrix $\mathbf{H}$ of a systematic QC code $\mathcal{C}$ and $\mathbf{y} \lar \mathbb{F}^{sn}$, the \emph{Decisional $s$-Quasi-Cyclic \SD Problem $s$-\textsf{DQCSD}($n, k, w$)} asks to decide with non-negligible advantage whether $(\mathbf{H, y}^\top)$ came from the $s$-\textsf{QCSD}($n, k, w$) distribution or the uniform distribution over $\mathbb{F}^{(sn-k)\times sn}\times\mathbb{F}^{sn-k}$.
\end{definition}

As for the ring-LPN problem, there is no known reduction from the search version of $s$-QCSD problem to its decisional version.
The proof of \cite{C:AppIshKus07} cannot be directly adapted in the quasi-cyclic case, however the best known attacks on the decisional
version of the problem $s$-QCSD remain the direct attacks on the search version of the problem $s$-QCSD. 

The situation is similar for the rank versions of these problems which are respectively denoted by $s$-\RQCSD and $s$-\DRQCSD,
and for which the best attacks over the decisional problem consist in
attacking the search version of the problem.

\subsection{Practical Attacks}
\label{sec:2-4-attacks}

The practical complexity of the \SD problem for the Hamming metric has been widely studied
for more than 50 years. For small weights the best known attacks are
exponential in the weight of the researched codeword.
The best attacks can be found in~\cite{EC:BJMM12}.

The \RSD problem is less known in cryptography but has also been
studied for a long time, ever since a rank metric version of the McEliece 
cryptosystem was introduced in 
1991~\cite{EC:GabParTre91}.
We recall the main types of attack on the \RSD problem below.

The complexity of practical attacks grows very quickly with the size of parameters:
there is a structural reason to this. For the Hamming distance,
attacks typically rely on enumerating
the number of words of length $n$ and support size (weight) $t$,
which amounts to the Newton binomial coefficient $\binom{n}{t}$, 
whose value is bounded from above by
by $2^n$. In the rank metric case, counting the number of possible supports of size
$r$ for a rank code of length $n$ over $\mathbb{F}_{q^m}$ corresponds to counting the number
of subspaces of dimension $r$ in $\mathbb{F}_{q^m}$: this involves
the \emph{Gaussian binomial coefficient}
of size roughly $q^{(m-r)m}$, whose value is also exponential 
in the blocklength but with a quadratic term
in the exponent.

There exist two types of generic attacks on the problem:
\begin{itemize}
\item \textbf{Combinatorial attacks}: these attacks are usually the best ones for small values
of $q$ (typically $q=2$) and when $n$ and $k$ are not too small:
when $q$ increases, the combinatorial aspect makes them less efficient.
The best combinatorial attack has recently been updated to $(n-k)^3m^3q^{(r-1)\lfloor\frac{(k+1)m}{n}\rfloor}$ to
take into account the value of $n$~\cite{GRS16}.
\item \textbf{Algebraic attacks}: the particular nature of the rank metric makes it a natural field
for algebraic attacks using Gröbner bases, since these attacks 
are largely independent of the value of $q$
and in some cases may also be largely independent of $m$.
These attacks are usually the most efficient when $q$ increases.
For the cases considered in this paper where $q$ is taken to be small,
the complexity is greater than the cost of combinatorial attacks (see~\cite{LP06,C:FauLevPer08,GRS16}).
\end{itemize}

Note that the recent improvements on decoding random codes for the
Hamming distance correspond to birthday paradox attacks. An open
question is whether these improvements apply to rank metric
codes. Given that the support of the error on codewords in rank metric
is not related to the error coordinates, the birthday paradox strategy
has failed for the rank metric, which for the moment seems to keep these codes protected from the aforementioned  advances.

\section{A New Encryption Scheme}
\label{sec:scheme}
\subsection{Encryption and Security}

\noindent{\bf Encryption Scheme. } An encryption scheme is a tuple of four polynomial time algorithms $(\Setup, \KeyGen,\Encrypt, \Decrypt)$:
\begin{itemize}
	\item $\Setup(1^\seck)$, where $\seck$ is the security parameter, generates the global parameters $\param$ of the scheme;
	\item $\KeyGen(\param)$ outputs a pair of keys, a (public) encryption key  $\pk$ and a (private) decryption key $\sk$;
	\item $\Encrypt(\pk, \bm{\mu}, \theta)$ outputs a ciphertext $\bf{c}$, on the message $\bm{\mu}$, under the encryption key $\pk$, with the randomness $\theta$;
	\item $\Decrypt(\sk,\bf{c})$ outputs the plaintext $\bm{\mu}$, encrypted in the ciphertext $\bf{c}$ or $\bot$.
\end{itemize}
Such an encryption scheme has to satisfy both \emph{Correctness} and \emph{Indistinguishability under Chosen Plaintext Attack} (IND-CPA) security properties.

\medskip\noindent{\bf Correctness}: For every $\lambda$, every
$\param\leftarrow \Setup(1^\lambda),$ every pair of keys $(\pk,\sk)$
generated by $\KeyGen$, every message $\bm{\mu}$, we should have $P[\Decrypt(\sk,\Encrypt(\pk,\bm{\mu}, \theta)) = \bm{\mu}]= 1 - \epsilon(\lambda)$ for $\epsilon$ a negligible function, where the probability is taken over varying randomness $\theta$.

\medskip\noindent{}\begin{minipage}[ht]{.56\textwidth}
		{\bf IND-CPA}~\cite{GolMic84}: This notion formalized by the adjacent game, states that an adversary shouldn't be able to efficiently guess which plaintext has been encrypted even if he knows it is one among two plaintexts of his choice.
		
		The global advantage for polynomial time adversaries (running in time less than $t$) is:
\begin{equation}
			\Adv^{\ind}_{\Ee}(\seck,t) = \max_{\Ae \leq t} \Adv^{\ind}_{\Ee,\Ae}(\seck), 
\end{equation}

		  \end{minipage}\hfill
	\framebox{\begin{minipage}[ht]{.40\textwidth}
			\begin{tabbing}
				= \= ==== \= = \= \kill \\[-1em]
				$\Exp_{\Ee,\Ae}^{\ind-b}(\seck)$ \\
				1. \> $\param \sets \Setup(1^\seck)$ \\
				2. \> $(\pk,\sk) \sets \KeyGen(\param)$ \\
				3. \> $(\bm{\mu}_0, \bm{\mu}_1) \sets \Ae(\FIND: \pk)$ \\
				4. \> $\bf{c}^* \sets \Encrypt(\pk,\bm{\mu}_b,\theta)$ \\
				5. \> $b' \sets \Ae(\GUESS:\bf{c}^*)$ \\
				6. \> \comreturn $b'$
			\end{tabbing}
		\end{minipage}}

\medskip

\noindent
where $\Adv^{\ind}_{\Ee,\Ae}(\seck)$ is the advantage the adversary $\mathcal{A}$ has in winning game $\Exp_{\Ee,\Ae}^{\ind-b}(\seck)$:
\begin{equation}
\Adv^{\ind}_{\Ee,\Ae}(\seck) = \left| \Pr[\Exp_{\Ee,\Ae}^{\ind-1}(\seck) = 1] - \Pr[\Exp_{\Ee,\Ae}^{\ind-0}(\seck) = 1] \right|.
\end{equation}

\subsection{Presentation of the Scheme}
\label{sec:3-2-presentation}
We begin this Section by describing a generic version of the proposed encryption scheme. This description does not depend
on the particular metric used. The particular case of the Hamming metric is denoted by HQC (for Hamming Quasi-Cyclic) and RQC (for Rank Quasi-Cyclic) in the case of the rank metric. Parameter sets for binary Hamming Codes and Rank Metric Codes can be respectively found in Sec.~\ref{sec:7-1-hamming} and~\ref{sec:7-2-rank}.

\medskip\noindent{\bf Presentation of the scheme. } Recall from the introduction that the scheme uses two types of codes, 
a decodable $[n,k]$ code which can correct $\delta$ errors 
and a random double-circulant $[2n,n]$ code. 
In the following, we assume $\mathcal{V}$ is a vector space on some field $\mathbb{F}$, $\omega$ is a norm on $\mathcal{V}$ and for any $\mathbf{x}$ and $\mathbf{y} \in \mathcal{V}$, their distance is defined as $\omega(\mathbf{x-y}) \in \mathbb{R}^+$. Now consider a linear code $\mathcal{C}$ over $\mathbb{F}$ of dimension $k$ and length $n$ (generated by $\mathbf{G} \in \mathbb{F}^{k\times n}$), that can correct up to $\delta$ errors via an efficient algorithm $\mathcal{C}$.\textsf{Decode}$(\cdot)$. The scheme consists of the following four polynomial-time algorithms:
\begin{itemize}
\item \Setup$(1^\lambda)$: generates the global parameters $n = n(1^\lambda), k = k(1^\lambda)$, $\delta = \delta(1^\lambda)$, and $w = w(1^\lambda)$. The plaintext space is $\mathbb{F}^{k}$. Outputs \param{} = $(n, k, \delta, w)$.
\item \KeyGen$(\param)$: generates $\mathbf{q}_r \lar \mathcal{V}$, matrix $\mathbf{Q} = \left(\mathbf{I}_n~|~\textnormal{\textbf{rot}}(\mathbf{q}_r)\right)$, the generator matrix $\mathbf{G}\in \mathbb{F}^{k\times n}$ of $\mathcal{C}$, $\sk = (\mathbf{x}, \mathbf{y}) \lar \mathcal{V}^2$ such that $\omega(\mathbf{x})= \omega(\mathbf{y}) = w$, sets $\pk = \left(\mathbf{G, Q}, \mathbf{s} = \sk\cdot\mathbf{Q}^\top\right)$, and returns $(\pk, \sk)$.
\item \Encrypt$(\pk=(\mathbf{G, Q}, \mathbf{s}), \bm{\mu}, \theta)$: uses randomness $\theta$ to generate $\bm{\epsilon} \lar \mathcal{V}$, $\mathbf{r} = (\mathbf{r}_1, \mathbf{r}_2) \lar \mathcal{V}^2$ such that $\omega(\bm{\epsilon}), \omega(\mathbf{r}_1), \omega(\mathbf{r}_2) \leq w$, sets $\mathbf{v}^\top = \mathbf{Qr}^\top$ and $\bm{\rho} = \bm{\mu}\mathbf{G} + \mathbf{s\cdot r}_2 + \bm{\epsilon}$. It finally returns $\mathbf{c} = \left(\mathbf{v}, \bm{\rho}\right)$, an encryption of $\bm{\mu}$ under pk.
\item \Decrypt$(\sk=(\mathbf{x}, \mathbf{y}), \mathbf{c}=(\mathbf{v}, \bm{\rho}))$: returns $\mathcal{C}$.\textsf{Decode}$(\bm{\rho}-\mathbf{v\cdot y})$.
\end{itemize}

Notice that the generator matrix $\mathbf{G}$ of the code $\mathcal{C}$ is publicly known, so the security of the scheme and the ability to decrypt do not rely on the knowledge of the error correcting code $\mathcal{C}$ being used.

\bigskip\noindent{\bf Correctness. } The correctness of our new encryption scheme clearly relies on the decoding capability of the code $\mathcal{C}$. Specifically, assuming $\mathcal{C}$.\textsf{Decode} correctly decodes $\bm{\rho}-\mathbf{v\cdot y}$, we have:
\begin{equation}
\Decrypt\left(\textnormal{sk}, \Encrypt\left(\textnormal{pk}, \bm{\mu}, \theta\right)\right) = \bm{\mu}.
\end{equation}
And $\mathcal{C}$.\textsf{Decode} correctly decodes $\bm{\rho} - \mathbf{x\cdot y}$ whenever
\begin{eqnarray}
&&\omega\left(\mathbf{s}\cdot \mathbf{r}_2 - \mathbf{v} \cdot \mathbf{y} + \bm{\epsilon}\right) \leq  \delta \\
&&\omega\left(\left(\mathbf{x}+\mathbf{q}_r \cdot \mathbf{y}\right) \cdot \mathbf{r}_2 - \left(\mathbf{r}_1 + \mathbf{q}_r \cdot \mathbf{r}_2 \right) \cdot \mathbf{y} + \bm{\epsilon}\right) \leq \delta \\
\label{eq:error-distro}&&\omega\left(\mathbf{x} \cdot \mathbf{r}_2-\mathbf{r}_1 \cdot \mathbf{y} + \bm{\epsilon}\right) \leq \delta
\end{eqnarray} 
In order to provide an upper bound on the decryption failure probability, an analysis of the distribution of the error vector $\mathbf{x\cdot r}_2-\mathbf{r}_1\cdot\mathbf{y} + \bm{\epsilon}$ is provided in Sec.~\ref{sec:5-analysis}.

\section{Security of the Scheme}
\label{sec:secu}
In this section we prove the security of our scheme, the proof is generic for any metric, and the security 
is reduced to the respective quasi-cyclic problems defined for Hamming and rank metric in Section 2.
\begin{theorem}
	The scheme presented above is \INDCPA under the $2$-\DQCSD and $3$-\DQCSD assumptions.
\end{theorem}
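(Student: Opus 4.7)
The plan is a standard sequence of game hops, using the 2-DQCSD assumption to randomize the public key and then the 3-DQCSD assumption to randomize the ciphertext. Throughout, the public matrix $\mathbf{G}$ of the fixed code $\mathcal{C}$ is treated as a deterministic part of the setup and plays no role in the reductions.

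\textbf{Game 0} is the real \INDCPA experiment $\Exp_{\Ee,\Ae}^{\ind-b}$. \textbf{Game 1} is identical except that the syndrome $\mathbf{s}$ in the public key is replaced by a uniformly random element of $\mathcal{V}$. Observe that $(\mathbf{Q},\mathbf{s}^\top) = \bigl((\mathbf{I}_n\,|\,\mathrm{rot}(\mathbf{q}_r)),\,\mathbf{Q}(\mathbf{x},\mathbf{y})^\top\bigr)$ is exactly a sample of the $2$-\QCSD distribution of Definition~\ref{def:QCSD-distro} with $k=n$ (the parity-check matrix is in the systematic form~\eqref{eq:systematic}, and both $\mathbf{x}$ and $\mathbf{y}$ have weight $w$). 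Hence any distinguisher between Game~0 and Game~1 yields a $2$-\DQCSD distinguisher with the same advantage, and $|\Pr[\Exp^0=1]-\Pr[\Exp^1=1]| \le \Adv^{2\text{-}\DQCSD}(\seck,t)$.

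\textbf{Game 2} replaces the challenge ciphertext $(\mathbf{v},\boldsymbol{\rho})$ by $(\mathbf{u},\boldsymbol{\mu}_b\mathbf{G}+\mathbf{u}')$ where $\mathbf{u},\mathbf{u}'$ are drawn uniformly from $\mathcal{V}$. To justify this hop, I rewrite the pair $(\mathbf{v},\boldsymbol{\rho}-\boldsymbol{\mu}_b\mathbf{G})$ using the correspondence between ring products and the $\mathrm{rot}$ operator:
\begin{equation*}
\begin{pmatrix}\mathbf{v}^\top\\ (\boldsymbol{\rho}-\boldsymbol{\mu}_b\mathbf{G})^\top\end{pmatrix}
=
\begin{pmatrix}\mathbf{I}_n & \mathbf{0} & \mathrm{rot}(\mathbf{q}_r)\\ \mathbf{0} & \mathbf{I}_n & \mathrm{rot}(\mathbf{s})\end{pmatrix}
\begin{pmatrix}\mathbf{r}_1^\top\\ \boldsymbol{\epsilon}^\top\\ \mathbf{r}_2^\top\end{pmatrix}
\end{equation*}
after a harmless reordering of block-columns. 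In Game~1, both $\mathbf{q}_r$ and $\mathbf{s}$ are uniform and independent, so the displayed matrix is a uniformly random parity-check matrix of a systematic $3$-quasi-cyclic $[3n,n]$ code, and $(\mathbf{r}_1,\boldsymbol{\epsilon},\mathbf{r}_2)$ has all three blocks of weight at most $w$. Thus the joint distribution of public key and challenge ciphertext in Game~1 is exactly a $3$-\QCSD sample (with $\mathbf{q}_r$ and $\mathbf{s}$ serving as the two circulant blocks), while in Game~2 the corresponding sample is uniform. Any distinguisher therefore yields a $3$-\DQCSD distinguisher, so $|\Pr[\Exp^1=1]-\Pr[\Exp^2=1]| \le \Adv^{3\text{-}\DQCSD}(\seck,t)$.

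Finally, in Game~2 the challenge ciphertext is $(\mathbf{u},\boldsymbol{\mu}_b\mathbf{G}+\mathbf{u}')$ with $\mathbf{u},\mathbf{u}'$ uniform and independent of everything else; hence $\boldsymbol{\mu}_b\mathbf{G}+\mathbf{u}'$ is uniform and $b$ is statistically independent of $\Ae$'s view, giving $\Pr[\Exp^2=1]=1/2$. Combining the three bounds yields $\Adv^{\ind}_{\Ee}(\seck,t)\le \Adv^{2\text{-}\DQCSD}(\seck,t)+\Adv^{3\text{-}\DQCSD}(\seck,t)$, which proves the theorem. The only delicate point I expect in fleshing this out is checking that the rewritten $2\times 3$ block matrix above is distributed exactly according to the $3$-\QCSD distribution of Definition~\ref{def:QCSD-distro}; this is where the block-column permutation is invoked and where one must confirm that the systematic-form convention does not cost any generality (a point already argued in the remark following Definition~\ref{def:SQCSD}).
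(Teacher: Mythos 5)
Your proposal is correct and follows essentially the same approach as the paper: the same $2$-\DQCSD reduction to randomize $\mathbf{s}$ in the public key, and the same $2\times 3$ block-matrix identity applied to $(\mathbf{v},\bm{\rho}-\bm{\mu}_b\mathbf{G})$ for the $3$-\DQCSD reduction. The only difference is organizational --- you stop at a game in which the ciphertext is independent of $b$, whereas the paper bridges all the way from an encryption of $\bm{\mu}_0$ to one of $\bm{\mu}_1$ (six games, each assumption invoked twice), which is why its bound carries a factor $2$; note also that under the paper's two-experiment definition of $\Adv^{\ind}_{\Ee,\Ae}$ your last step should be stated as ``the $b=0$ and $b=1$ experiments coincide in Game~2'' rather than ``$\Pr[\Exp^2=1]=1/2$'', and the triangle inequality over both values of $b$ then reintroduces exactly that factor of $2$.
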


\begin{proof}
To prove the security of the scheme, we are going to build a sequence of games transitioning from an adversary receiving an encryption of message $\bm{\mu}_0$ to an adversary receiving an encryption of a message $\bm{\mu}_1$ and show that if the adversary manages to distinguish one from the other, then we can build a simulator breaking the \DQCSD assumption, for QC codes of order $2$ or $3$ (codes with parameters $[2n,n]$ or $[3n,2n]$), and running in approximately the same time.

\begin{games}
	\game{G:Real0}
This is the real game, we run an honest $\KeyGen$ algorithm, and after receiving $(\bm{\mu}_0, \bm{\mu}_1)$ from the adversary we produce an  encryption of $\bm{\mu}_0$.

	\game{G:SemId0}
In this game we start by forgetting the decryption key $\sk$, and taking ${\bf s}$ at random, and then proceed honestly.

	\game{G:Id00}
Now that we no longer know the decryption key, we can start generating random ciphertexts. So instead of picking correctly weighted $\mathbf{r}_1,\mathbf{r}_2,\bm{\epsilon}$, the simulator now picks random vectors in the full space.

\game{G:Id01}
We now encrypt the other plaintext. We chose $\mathbf{r}'_1,\mathbf{r}'_2,\bm{\epsilon}'$ uniformly and set $\mathbf{v}^\top = \mathbf{Q}\bm{r}'^\top$ and $\bm{\rho} = \bm{\mu}_1\mathbf{G} + \mathbf{s}\cdot \mathbf{r}'_2 + \bm{\epsilon}'$.

\game{G:Id10}
In this game, we now pick $\mathbf{r}'_1,\mathbf{r}'_2,\bm{\epsilon}'$ with the correct weight.

\game{G:Real1}
We now conclude by switching the public key to an honestly generated one.
\end{games}

The only difference between Game~\printgame{G:Real0} and Game~\printgame{G:SemId0} is the $s$ in the public key sent to the attacker at the beginning of the IND-CPA game. If the attacker has an algorithm $\mathcal{A}$ able to distinguish these two games he can build a distinguisher for the $\DQCSD$ problem. Indeed for a $\DQCSD$ challenge $(Q,s)$ he can: adjoin $G$ to build a public key;  run the IND-CPA game with this key and algorithm $\mathcal{A}$; decide on which Game he is. He then replies to the $\DQCSD$ challenge saying that $(Q,s)$ is uniform if he is on Game~\printgame{G:SemId0} or follows the $\QCSD$ distribution if he is in Game~\printgame{G:Real0}.

In both Game~\printgame{G:SemId0} and Game~\printgame{G:Id00}  the plaintext encrypted is known to be $\mu_0$ the attacker can compute:
$$\left(\begin{matrix}
\mathbf{v} \\ \bm{\rho} - \bm{\mu}_0 \mathbf{G}
\end{matrix}\right)
=
\left(\begin{matrix}
\mathbf{I}_n & \mathbf{0} & \textnormal{\textbf{rot}}(\mathbf{q}_r) \\
\mathbf{0} & \mathbf{I}_n & \textnormal{\textbf{rot}}(\mathbf{s})
\end{matrix}\right)
\cdot 
\left(
\mathbf{r}_1, \bm{\epsilon}, \mathbf{r}_2
\right)^\top
$$
The difference between Game~\printgame{G:SemId0} and Game~\printgame{G:Id00} is that in the former $(\mathbf{v}, \bm{\rho} - \bm{\mu}_0 \mathbf{G})$ follows the $\QCSD$ distribution (for a $2n \times 3n$ QC matrix of order $3$), and in the latter it follows a uniform distribution (as $\bm{r}_1$ and $\bm{\epsilon}$ are  uniformly distributed and independently chosen One-Time Pads). If the attacker is able to distinguish Game~\printgame{G:SemId0} and Game~\printgame{G:Id00} he can therefore break the $3-\DQCSD$ assumption.

The outputs from Game~\printgame{G:Id00} and Game~\printgame{G:Id01} follow the exact same distribution, and therefore the two games are indistinguishable from an information-theoretic point of view. Indeed, for each tuple $(\bm{r}, \bm{\epsilon})$ of Game~\printgame{G:Id00}, resulting in a given $(\mathbf{v},\bm{\rho})$,  there is a one to one mapping to a couple $(\bm{r}',\bm{\epsilon}')$ resulting in  Game~\printgame{G:Id01} in the \emph{same} $(\mathbf{v},\bm{\rho})$, namely $\bm{r}'=\bm{r}$ and $\bm{\epsilon}' - \bm{\mu}_0 \mathbf{G} +\bm{\mu}_1 \mathbf{G}$. This implies that choosing uniformly $(\bm{r}, \bm{\epsilon})$ in Game~\printgame{G:Id00} and choosing uniformly $(\bm{r}', \bm{\epsilon}')$ in Game~\printgame{G:Id01} leads to the same output distribution for $(\mathbf{v},\bm{\rho})$.

Game~\printgame{G:Id01} and Game~\printgame{G:Id10} are the equivalents of Game~\printgame{G:Id00} and Game~\printgame{G:SemId0} except $\bm{\mu}_1$ is used instead of $\bm{\mu}_0$. A distinguisher between these two games breaks therefore the $3-\DQCSD$ assumption too. Similarly Game~\printgame{G:Id01} and Game~\printgame{G:Real1} are the equivalents of Game~\printgame{G:SemId0} and Game~\printgame{G:Real0} and a distinguisher between these two games breaks the $\DQCSD$ assumption.

\smallskip

We managed to build a sequence of games allowing a simulator to transform a ciphertext of a message $\bm{\mu}_0$ to a ciphertext of a message $\bm{\mu}_1$. Hence the advantage of an adversary against the $\INDCPA$ experiment is bounded: 

\begin{equation}\label{eq:comprendPas}
\Adv^{\ind}_{\Ee,\Ae}(\seck) \leq  2 \cdot \left( \Adv^{2\textnormal{-}\DQCSD}(\seck) + \Adv^{3\textnormal{-}\DQCSD}(\seck) \right).
\end{equation}
\end{proof}

\philippe{Preuve en métrique de hamming, mettre une note comme quoi ça marche aussi en rang en changeant les noms des schémas}

\section{Analysis of the Distribution of the Error Vector of the Scheme for Hamming Distance}
\label{sec:5-analysis}

The aim of this Section is to determine the probability that 
the condition in Eq. (\ref{eq:error-distro}) holds. In order to 
do so, we study the error distribution of the error vector 
$\mathbf{e} = \mathbf{x\cdot r}_2 - \mathbf{r}_1\cdot \mathbf{y} + \bm{\epsilon}$.


\medskip

The vectors $\mathbf{x},\mathbf{y},\mathbf{r}_1,\mathbf{r_2},
\bm{\epsilon}$ 
have been taken to be uniformly and independently chosen among vectors
of weight $w$. A very close probabilistic model is when all these
independent vectors are chosen to follow the distribution of random
vectors whose coordinates are independent Bernoulli variables of
parameter $p=w/n$. To simplify analysis we shall assume this model
rather than the constant weight uniform model. Both models are very
close, and our cryptographic protocols work just as well in both settings.

We first evaluate the distributions of the products $\mathbf{x\cdot
  r}_2$ and $\mathbf{r}_1\cdot \mathbf{y}$.

\begin{proposition}\label{pr:prod}
Let $\mathbf{x}=(X_1,\ldots,X_n)$ be a random vector where the $X_i$
are independent Bernoulli variables of parameter $p$,
$P(X_i=1)=p$. Let $\mathbf{y}=(Y_1,\ldots,Y_n)$ be a vector following
the same distribution and independent of $\mathbf{x}$.
Let $\mathbf{z = x\cdot y}=(Z_1,\ldots,Z_n)$ as defined in Eq. (\ref{eq:product}). Then 
\begin{equation}\label{eq:ptilde}
    \begin{cases}
\mathrm{Pr}[Z_k=1] =
\displaystyle    
\sum_{0\leq i\leq n,\atop i \textnormal{ odd}}
\binom{n}{i}p^{2i}\left(1-p^2\right)^{n-i}, \\
\mathrm{Pr}[Z_k=0] =
\displaystyle\sum_{0\leq i\leq n,\atop i \textnormal{ even}}
\binom{n}{i}p^{2i}\left(1-p^2\right)^{n-i}.
\end{cases}
\end{equation}
\end{proposition}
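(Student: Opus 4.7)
The plan is to fix an index $k \in \{0, 1, \ldots, n-1\}$, rewrite $Z_k$ as a XOR of $n$ carefully chosen products, show that these products are mutually independent Bernoulli$(p^2)$ variables, and finally apply the standard parity-of-a-binomial formula. The first step relies on the fact that in the Hamming setting the field is $\mathbb{F}_2$, so that the sum in Eq. (\ref{eq:product}) can be read as a XOR; also, as $i$ ranges over $\{0,\ldots,n-1\}$, the index $j=(k-i) \bmod n$ takes each value in $\{0,\ldots,n-1\}$ exactly once. Hence I would write
$$Z_k \;=\; \bigoplus_{i=0}^{n-1} U_i, \qquad \text{where } U_i := X_i \, Y_{(k-i) \bmod n}.$$

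The central step is to establish the mutual independence of the variables $U_0,\ldots,U_{n-1}$. Each $X_\ell$ appears in exactly one product (namely $U_\ell$), and each $Y_m$ appears in exactly one product (namely $U_{(k-m) \bmod n}$). Therefore the $U_i$ depend on pairwise disjoint subsets of the $2n$ mutually independent Bernoulli variables $\{X_i,Y_j\}$, so the $U_i$ are themselves mutually independent. Since each $U_i$ is a product of two independent Bernoulli$(p)$ random variables, $U_i \sim \textnormal{Bernoulli}(p^2)$.

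To finish, set $S := \sum_{i=0}^{n-1} U_i$, understood as an integer sum; by the preceding step, $S \sim \textnormal{Binomial}(n, p^2)$, so $\Pr[S = i] = \binom{n}{i} p^{2i}(1-p^2)^{n-i}$. Since $Z_k$ is the parity of $S$, we have $\Pr[Z_k=1] = \Pr[S \text{ odd}]$ and $\Pr[Z_k=0] = \Pr[S \text{ even}]$, which yields exactly the two expressions in Eq.~(\ref{eq:ptilde}).

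I do not anticipate a serious obstacle: the argument is essentially a careful bookkeeping of independence, guaranteed by the fact that the convolution structure in $\mathcal{R} = \mathbb{F}_2[X]/(X^n-1)$ induces a bijection $i \mapsto (k-i) \bmod n$ for each fixed $k$, so that no $X_i$ or $Y_j$ is reused within the definition of a single $Z_k$. The only point deserving emphasis is that this is a marginal statement about one coordinate $Z_k$; the different coordinates of $\mathbf{z}$ are of course not jointly independent, but that is not needed here.
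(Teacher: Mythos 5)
Your proof is correct and follows the same route as the paper: decompose $Z_k$ into the $n$ products $X_iY_{(k-i)\bmod n}$, observe that each $X_i$ and each $Y_j$ occurs in exactly one product so the products are mutually independent Bernoulli$(p^2)$ variables, and read off the parity of the resulting binomial sum. Your write-up merely makes the disjoint-support argument for independence more explicit than the paper does.
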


\begin{proof}
  We have 
  \begin{equation}
    \label{eq:Zk}
    Z_k=\sum_{i+j=k\bmod n}X_iY_j \quad \bmod 2.
  \end{equation}
  Every term $X_iY_j$ is the product of two independent Bernoulli
  variables of parameter $p$, and is therefore a Bernoulli variable of
  parameter $p^2$. The variable $Z_k$ is the sum of $n$ such products,
  which are all independent since every variable $X_i$ is involved
  exactly once in \eqref{eq:Zk}, for $0\leq i\leq n-1$, and similarly
  every variable $Y_j$ is involved once in \eqref{eq:Zk}.
  Therefore $Z_k$ is the sum modulo $2$ of $n$ independent Bernoulli
  variables of parameter $p^2$.
\end{proof}

Let us denote by $\tilde{p} = \tilde{p}(n, w) = \mathrm{Pr}[z_k = 1]$
from Eq. \eqref{eq:ptilde}. We will be working in the regime where
$w=\omega\sqrt{n}$, meaning $p^2=(\frac wn)^2=\omega^2/n$. When $n$
goes to infinity we have that the binomial distribution of the weight of the binary $n$-tuple
  $$(X_iX_j)_{i+j=k\,\bmod n}$$
converges to the Poisson distribution of parameter $\omega^2$ so that, for
fixed $\omega=w/\sqrt{n}$,
\begin{equation}
  \label{eq:poisson}
  \tilde{p}(n, w)=\mathrm{Pr}[z_k = 1]\xrightarrow[n\to\infty]{}
  e^{-\omega^2}\sum_{\ell\,\text{odd}}\frac{\omega^{2\ell}}{\ell !} = e^{-\omega^2}\sinh\omega^2.
\end{equation}

Let $\mathbf{x, y, r}_1, \mathbf{r}_2$ be independent random vectors
whose coordinates are independently Bernoulli distributed with
parameter $p$. Then the $k$-th coordinates of $\mathbf{x\cdot r}_2$
and of $\mathbf{r}_1\cdot \mathbf{y}$ are independent and Bernoulli
distributed with parameter $\tilde{p}$. Therefore their  modulo $2$ sum
$\mathbf{t} = \mathbf{x\cdot r}_2 - \mathbf{r}_1\cdot \mathbf{y}$ is
Bernoulli distributed with
\begin{equation}\label{eq:sum-prod}
    \begin{cases}
     \mathrm{Pr}[t_k=1] =
      2\tilde{p}(1-\tilde{p}), \\
     \mathrm{Pr}[t_k=0] =
(1-\tilde{p})^2+\tilde{p}^2 .
    \end{cases}
\end{equation}

Finally, by adding the final term $\bm{\epsilon}$ to $\mathbf{t}$, 
we obtain the distribution of the coordinates of the error vector
$\mathbf{e} = \mathbf{x\cdot r}_2 - \mathbf{r}_1\cdot \mathbf{y} +
\bm{\epsilon}$. Since the coordinates of $\bm{\epsilon}$ are Bernoulli
of parameter $p$ and those of
$\mathbf{t}$ are Bernoulli distributed as
\eqref{eq:sum-prod} and independent from $\bm{\epsilon}$, we obtain~:

\begin{theorem}\label{th:error-distro}
Let $\mathbf{x, y, r}_1, \mathbf{r}_2 \sim \mathcal{B}\left(n,\frac{w}{n}\right)$, $\bm{\epsilon} \sim \mathcal{B}\left(n, \epsilon\right)$, and let $\mathbf{e} = \mathbf{x\cdot r}_2 - \mathbf{r}_1\cdot \mathbf{y} + \bm{\epsilon}$. Then
\begin{equation}
\label{eq:pr-total}
    \begin{cases}
\mathrm{Pr}[e_k=1] = 
2\tilde{p}(1-\tilde{p})(1-\frac{\epsilon}{n}) +
\left((1-\tilde{p})^2+\tilde{p}^2\right)\frac{\epsilon}{n}, \\
\mathrm{Pr}[e_k=0] = \left((1-\tilde{p})^2+\tilde{p}^2\right)(1-\frac{\epsilon}{n}) +
2\tilde{p}(1-\tilde{p})\frac{\epsilon}{n}.
    \end{cases}
\end{equation}
\end{theorem}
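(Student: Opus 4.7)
The plan is to reduce everything to Proposition~\ref{pr:prod} and then apply the standard XOR formula for independent Bernoulli variables, twice. First, since each of $\mathbf{x},\mathbf{y},\mathbf{r}_1,\mathbf{r}_2$ has i.i.d.\ Bernoulli$(w/n)$ coordinates, Proposition~\ref{pr:prod} tells us that the $k$-th coordinate of both $\mathbf{x}\cdot\mathbf{r}_2$ and of $\mathbf{r}_1\cdot\mathbf{y}$ is marginally Bernoulli$(\tilde{p})$ with $\tilde{p}=\tilde{p}(n,w)$ as in~\eqref{eq:ptilde}. No extra work is required here: we simply invoke the proposition once for each of the two products.

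Next, for a fixed coordinate index $k$, I would argue that $(\mathbf{x}\cdot\mathbf{r}_2)_k$ and $(\mathbf{r}_1\cdot\mathbf{y})_k$ are independent. This is immediate because the first is a measurable function of $(\mathbf{x},\mathbf{r}_2)$, the second of $(\mathbf{r}_1,\mathbf{y})$, and the four vectors are mutually independent by hypothesis. Consequently $t_k=(\mathbf{x}\cdot\mathbf{r}_2)_k\oplus(\mathbf{r}_1\cdot\mathbf{y})_k$ is the mod-$2$ sum of two independent Bernoulli$(\tilde{p})$ random variables, which gives $\Pr[t_k=1]=2\tilde{p}(1-\tilde{p})$ and $\Pr[t_k=0]=(1-\tilde{p})^2+\tilde{p}^2$, recovering~\eqref{eq:sum-prod}.

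Finally, $\bm{\epsilon}$ is independent of the other four vectors, so $\epsilon_k$ is independent of $t_k$. A second application of the identity $\Pr[A\oplus B=1]=\Pr[A=1]\Pr[B=0]+\Pr[A=0]\Pr[B=1]$, with $q_1=2\tilde{p}(1-\tilde{p})$ and $q_2=\epsilon/n$, then yields exactly the expressions claimed for $\Pr[e_k=1]$ and $\Pr[e_k=0]$ in~\eqref{eq:pr-total}.

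The only mildly subtle point, and the one I would flag explicitly, is that the statement only describes the \emph{marginal} law of a single coordinate $e_k$: the distinct coordinates of $\mathbf{x}\cdot\mathbf{r}_2$ are strongly correlated because they share the underlying $X_i$'s, but this correlation does not enter the claim at all. Beyond this, the proof is essentially bookkeeping on top of Proposition~\ref{pr:prod}, and I do not expect any genuine obstacle; the only step worth stating carefully is the independence of the two products, which follows directly from the independence of the four input vectors.
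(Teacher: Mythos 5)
Your proposal is correct and matches the paper's own argument: invoke Proposition~\ref{pr:prod} for each product, observe that $(\mathbf{x}\cdot\mathbf{r}_2)_k$ and $(\mathbf{r}_1\cdot\mathbf{y})_k$ are independent since they depend on disjoint independent vectors, and apply the XOR formula twice (once for $\mathbf{t}$, once for adding $\bm{\epsilon}$). Your remark that only the marginal law of $e_k$ is claimed is also consistent with the paper, which defers the coordinate-independence issue to a separate simplifying assumption.
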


Theorem~\ref{th:error-distro} gives us the probability that a
coordinate of the error vector $\mathbf{e}$ is $1$. 
In our simulations to follow, which occur in the regime $p=\omega\sqrt{n}$ with constant
$\omega$, we make the simplifying assumption that the coordinates
of $\mathbf{e}$ are independent, meaning that the weight of $\mathbf{e}$
follows a binomial distribution of parameter $p^\star$,
where $p^\star$ is defined as in Eq. (\ref{eq:pr-total}): $p^\star =
p^\star (n, w) = 2\tilde{p}(1-\tilde{p})(1-\frac{\epsilon}{n}) +
\left((1-\tilde{p})^2+\tilde{p}^2\right)\frac{\epsilon}{n}$. This
approximation will give us, for $0\leq d \leq \min(2w^2 +\epsilon, n)$,
\begin{equation}\label{eq:weight}
\mathrm{Pr}[\omega(\mathbf{e}) = d] = \binom{n}{d}{(p^\star)}^d{(1-p^\star)}^{(n-d)}.
\end{equation}
In practice, the results obtained by simulation on the decryption failure are very coherent with this assumption.



%

\section{Decoding Codes with Low Rates and Good Decoding Properties}
\label{sec:6-low-rate}

The previous Section allowed us to determine the distribution of the error vector $\mathbf{e}$ in the configuration where a simple linear code is used. Now the decryption part corresponds to decoding the error described in the previous section. Any decodable code
can be used at this point, depending on the considered application:
clearly small dimension codes will allow better decoding, 
but at the cost of a lower encryption rate. The particular case that we consider corresponds typically
to the case of key exchange or authentication, where only a small amount
of data needs to be encrypted (typically 80, 128 or 256 bits, a symmetric secret key size). 
We therefore need codes with low rates which are able to correct many errors.
Again, a tradeoff is necessary between efficiently decodable codes but with a high decoding cost and less efficiently decodable codes
but with a smaller decoding cost.

An example of such a family of codes with good decoding properties, 
meaning a simle decoding algorithm which can be analyzed,
is given by Tensor Product Codes, which are used for biometry~\cite{BCC07}, where the same type
of issue appears. More specifically, we will consider a special simple case of Tensor Product Codes (BCH codes and
repetition codes), for which a precise analysis of the decryption
failure can be obtained in the Hamming distance case.

\subsection{Tensor Product Codes}
\label{sec:6-1-tensor}


\begin{definition}[Tensor Product Code]\label{def:tensor}
Let $\mathcal{C}_1$ (resp. $\mathcal{C}_2$) be a $[n_1, k_1, d_1]$ (resp. $[n_2, k_2, d_2]$) linear code over $\mathbb{F}$. The \emph{Tensor Product Code} of $\mathcal{C}_1$ and $\mathcal{C}_2$ denoted $\mathcal{C}_1 \otimes \mathcal{C}_2$ is defined as the set of all $n_2 \times n_1$ matrices whose rows are codewords of $\mathcal{C}_1$ and whose columns are codewords of $\mathcal{C}_2$.

More formally, if $\mathcal{C}_1$ (resp. $\mathcal{C}_2$) is generated by $\mathbf{G}_1$ (resp. $\mathbf{G}_2$), then \begin{equation}
\mathcal{C}_1 \otimes \mathcal{C}_2 = \left\lbrace\mathbf{G}_2^\top\mathbf{XG}_1\textnormal{ for }\mathbf{X} \in \mathbb{F}^{k_2 \times k_1} \right\rbrace
\end{equation}
\end{definition}

\begin{remark}
Using the notation of the above Definition, the tensor product of two linear codes is a $[n_1n_2, k_1k_2, d_1d_2]$ linear code.
\end{remark}

\subsection{Specifying the Tensor Product Code}
\label{sec:bch}
Even if 
tensor product codes seem well-suited for our purpose, an analysis 
similar to the one in Sec.~\ref{sec:5-analysis} becomes much more 
complicated. Therefore, in order to provide strong guarantees on 
the decryption failure probability for our cryptosystem, we chose 
to restrict ourselves to a tensor product code 
$\mathcal{C} = \mathcal{C}_1 \otimes \mathcal{C}_2$, where 
$\mathcal{C}_1$ is a BCH$(n_1, k, \delta_1)$ code of length $n_1$, 
dimension $k$, and correcting capability $\delta_1$ (\emph{i.e.} it 
can correct up to $\delta_1$ errors), and $\mathcal{C}_2$ is the 
repetition code of length $n_2$ and dimension $1$, denoted 
$\mathds{1}_{n_2}$. (Notice that $\mathds{1}_{n_2}$ can decode up to
 $\delta_2 = \lfloor \frac{n_2-1}{2} \rfloor$.) Subsequently, the 
analysis becomes possible and remains accurate but the negative 
counterpart is that there probably are some other tensor product 
codes achieving better efficiency (or smaller key sizes). 


\medskip In the Hamming metric version of the cryptosystem we propose, 
a message $\bm{\mu} \in \mathbb{F}^{k}$ is first encoded into 
$\bm{\mu}_1 \in \mathbb{F}^{n_1}$ with a BCH($n_1, k_1 = k, \delta_1$) 
code, then each coordinate $\mu_{1,i}$ of $\bm{\mu}_1$ is re-encoded 
into $\tilde{\bm{\mu}}_{1,i} \in \mathbb{F}^{n_2}$ with a repetition 
code $\mathds{1}_{n_2}$. We denote $n=n_1n_2$ the length of the 
tensor product code (its dimension is $k =k_1\times 1$), and by 
$\tilde{\bm{\mu}}$ the resulting encoded vector, \emph{i.e.} 
$\tilde{\bm{\mu}} = \left(\tilde{\bm{\mu}}_{1,1}, \ldots, \tilde{\bm{\mu}}_{1,n_1}\right) \in \mathbb{F}^{n_1n_2}$.


The efficient algorithm used for the repetition code is the majority decoding, \emph{i.e.} more formally: \begin{equation}
\mathds{1}_{n_2}.\textsf{Decode}(\tilde{\bm{\mu}}_{1,j}) = \left\{
                \begin{array}{ll}
1\textnormal{ if } \sum_{i=0}^{n_2-1}\tilde{\mu}_{1,j,i} \geq \lceil \frac{n_2+1}{2} \rceil, \\
0\textnormal{ otherwise.}
                \end{array}
              \right. 
\end{equation}

\bigskip\noindent{\bf Decryption Failure Probability. } With a tensor 
product code $\mathcal{C} = \mathrm{BCH}(n_1, k, \delta) \otimes \mathds{1}_{n_2}$
as defined above, a decryption failure occurs whenever the decoding 
algorithm of the BCH code does not succeed in correcting errors 
that would have arisen after wrong decodings by the repetition code.
Therefore, the analysis of the decryption failure probability is again 
split into three steps: evaluating the probability that the repetition code 
does not decode correctly, the conditional probability of a wrong 
decoding for the BCH code given an error weight and finally, the 
decryption failure probability using the law of total probability.

\bigskip\noindent{\bf Step 1. } We now focus on the probability that 
an error occurs while decoding the repetition code. As shown in 
Sec.~\ref{sec:5-analysis}, the probability for a coordinate of 
$\mathbf{e = x\cdot r}_2 - \mathbf{r}_1\cdot \mathbf{y} + \bm{\epsilon}$ to be 
$1$ is $p^\star = p^\star(n_1n_2, w, \epsilon)$ (see Eq. (\ref{eq:pr-total})).
As mentioned above, $\mathds{1}_{n_2}$ can decode up to 
$\delta_2 = \lfloor \frac{n_2 -1}{2} \rfloor$ errors. Therefore, assuming 
that the error vector $\mathbf{e}$ has weight $\gamma$ (which occurs with the 
probability given in Eq. (\ref{eq:weight})), the probability of getting a 
decoding error on a single block of the repetition code $\mathds{1}_{n_2}$ 
is hence given by:
\begin{equation}
\label{eq:pbar}
\bar{p}_\gamma = \bar{p}_\gamma(n_1, n_2) = \sum_{i = \lfloor\frac{n_2-1}{2}\rfloor + 1}^{n_2} \binom{n_2}{i} \left(\frac{\gamma}{n_1n_2}\right)^i \left(1-\frac{\gamma}{n_1n_2}\right)^{n_2-i}.
\end{equation}

\bigskip\noindent{\bf Step 2. } We now focus on the BCH$(n_1, k, \delta_1)$ code, and recall that it can correct up to $\delta_1$ errors. Now the probability $\mathcal{P}$ that the BCH$(n_1, k, \delta_1)$ code fails to decode correctly the encoded message $\bm{\mu}_1$ back to $\bm{\mu}$ is given by the probability that an error occurred on at least $\delta_1 +1$ blocks of the repetition code. Therefore, we have
\begin{equation}
\label{eq:pcapital}
\mathcal{P} = \mathcal{P}(\delta_1, n_1, n_2, \gamma) = \sum_{i=\delta_1 + 1}^{n_1} \binom{n_1}{i}\left(\bar{p}_\gamma\right)^i(1-\bar{p}_\gamma)^{n_1-i}.
\end{equation}


\bigskip\noindent{\bf Step 3. } Finally, using the law of total 
probability, we have that the decryption failure 
probability is given by the sum\ul{,} over all the possible weights\ul{,} of the 
probability that the error has this specific weight times the 
probability of a decoding error for this weight. This is captured in the 
following theorem, whose proof is a straightforward consequence of 
the formulae of Sec.~\ref{sec:5-analysis} and~\ref{sec:6-1-tensor}.

\begin{theorem}
Let $\mathcal{C} = \mathrm{BCH}(n_1, k, \delta) \otimes \mathds{1}_{n_2}$, $(\textsf{pk, sk}) \leftarrow \KeyGen$, $\bm{\mu} \lar \mathbb{F}_2^k$, and some randomness $\theta \in \{0,1\}^*$, then with the notations above, the decryption failure probability is
\end{theorem}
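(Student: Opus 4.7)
The plan is to assemble the three ingredients already derived in Sections~\ref{sec:5-analysis} and~\ref{sec:6-1-tensor} via the law of total probability, conditioning on the Hamming weight $\gamma$ of the error vector $\e = \x\cdot\mathbf{r}_2 - \mathbf{r}_1\cdot\y + \bm{\epsilon}$. Concretely, I would start from the decomposition
\[
P_{\mathrm{fail}} \;=\; \sum_{\gamma=0}^{n_1 n_2} \Pr\bigl[\omega(\e)=\gamma\bigr]\cdot\Pr\bigl[\mathcal{C}.\textsf{Decode}(\bm{\rho}-\mathbf{v}\cdot\y)\neq \bm{\mu}\,\bigm|\,\omega(\e)=\gamma\bigr],
\]
since the events $\{\omega(\e)=\gamma\}$ form a partition of the probability space. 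The first factor is given directly by Eq.~(\ref{eq:weight}) under the simplifying assumption stated at the end of Sec.~\ref{sec:5-analysis} that coordinates of $\e$ are independent Bernoulli of parameter $p^\star=p^\star(n_1 n_2, w, \epsilon)$; this rests on Theorem~\ref{th:error-distro} which computes the marginal distribution of each coordinate.

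The conditional factor splits along the two layers of the tensor construction. First, conditional on $\omega(\e)=\gamma$, each of the $\gamma$ error positions is (approximately) uniformly distributed among the $n_1 n_2$ coordinates, so the number of errors falling inside any single length-$n_2$ repetition block is approximately $\mathrm{Bin}(n_2,\gamma/(n_1 n_2))$; majority decoding of that block fails exactly when this count exceeds $\lfloor(n_2-1)/2\rfloor$, yielding the per-block failure probability $\bar p_\gamma$ of Eq.~(\ref{eq:pbar}). Second, treating the $n_1$ repetition-block outcomes as independent $\mathrm{Bernoulli}(\bar p_\gamma)$ trials, the outer BCH$(n_1,k,\delta_1)$ decoder fails precisely when strictly more than $\delta_1$ blocks have been miscorrected; this produces $\mathcal{P}(\delta_1,n_1,n_2,\gamma)$ of Eq.~(\ref{eq:pcapital}). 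Substituting these two expressions into the total-probability decomposition gives the claimed formula
\[
P_{\mathrm{fail}} \;=\; \sum_{\gamma=0}^{n_1 n_2}\binom{n_1 n_2}{\gamma}(p^\star)^{\gamma}(1-p^\star)^{n_1 n_2-\gamma}\cdot \mathcal{P}(\delta_1,n_1,n_2,\gamma).
\]

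The main obstacle is that neither independence assumption holds exactly: the coordinates of $\e$ are correlated through the convolution structure of Eq.~(\ref{eq:Zk}), and the failure events of the $n_1$ repetition blocks are likewise correlated once $\omega(\e)$ is fixed (a hypergeometric, not binomial, count on each block, and dependent across blocks). As already acknowledged before Eq.~(\ref{eq:weight}), the first approximation is vindicated by simulation in the regime $w=\omega\sqrt n$, and the second is standard for tensor-product decoding analyses; consequently the formula should be understood as a tight heuristic estimate rather than a strict equality. If a fully rigorous upper bound were required, I would replace the inner binomial by the exact hypergeometric tail on $n_2$ out of $n_1 n_2$ coordinates and handle the inter-block dependence through a Chernoff-type concentration bound on the per-block error counts, at the modest cost of slightly more conservative parameters.
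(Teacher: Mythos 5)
Your proposal takes essentially the same route as the paper: the law of total probability over the weight $\gamma$ of $\mathbf{e}$, with the first factor given by the binomial approximation of Eq.~(\ref{eq:weight}) and the conditional factor built from the per-block repetition failure probability $\bar p_\gamma$ of Eq.~(\ref{eq:pbar}) and the BCH failure probability $\mathcal{P}$ of Eq.~(\ref{eq:pcapital}), exactly as in Steps 1--3 of Sec.~\ref{sec:bch}. Your explicit caveats about the two independence assumptions being heuristic (and backed by simulation) mirror the paper's own remarks, so the argument is correct and essentially identical.
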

\begin{eqnarray}
p_{\textnormal{fail}} &=& \mathrm{Pr}[\textnormal{\Decrypt}\left(\textnormal{sk}, \Encrypt\left(\textnormal{pk}, \bm{\mu}, \theta \right)\right) \neq \bm{\mu}.
]\\
\label{eq:pfail}&=& \sum_{\gamma=0}^{\min\left(2w^2+\epsilon, n_1n_2\right)} \mathrm{Pr[\omega(\mathbf{e}) = \gamma]} \cdot \mathcal{P}(\delta_1, n_1, n_2, \gamma)
\end{eqnarray}

\section{Parameters}
\label{sec:7-parameters}
\subsection{HQC Instantiation for Hamming Metric}\label{sec:7-1-hamming}

In this Section, we describe our new cryptosystem in the Hamming 
metric setting. As mentioned in the previous Section, we use a 
tensor product code (Def.~\ref{def:tensor}) 
$\mathcal{C} = \mathrm{BCH}(n_1, k, \delta) \otimes \mathds{1}_{n_2}$. 
A message $\bm{\mu} \in \mathbb{F}^{k}$ is encoded into 
$\bm{\mu}_1 \in \mathbb{F}^{n_1}$ with the BCH code, then each 
coordinate $\mu_{1,i}$ of $\bm{\mu}_1$ is encoded into 
$\tilde{\bm{\mu}}_{1,i} \in \mathbb{F}^{n_2}$ with 
$\mathds{1}_{n_2}$. To match the description of our cryptosystem 
in Sec.~\ref{sec:3-2-presentation}, we have 
$\bm{\mu}\mathbf{G} = \tilde{\bm{\mu}} = \left(\tilde{\bm{\mu}}_{1,1}, \ldots, \tilde{\bm{\mu}}_{1,n_1}\right) \in \mathbb{F}^{n_1n_2}$. To obtain the ciphertext, 
$\mathbf{r = (r}_1, \mathbf{r}_2) \lar \mathcal{V}^2$ and $\bm{\epsilon} \lar \mathcal{V}$ 
are generated and the encryption of $\bm{\mu}$ is 
$\mathbf{c} = (\mathbf{rQ}^\top, \bm{\rho} = \bm{\mu}\mathbf{G} + \mathbf{s\cdot r}_2 + \bm{\epsilon})$.

\bigskip\noindent{\bf Parameters for Our Scheme. } 
 We provide two sets of parameters: the first one in Tab.~\ref{tab:params_h_pre} targets different
pre-quantum security levels while the second one in Tab.~\ref{tab:params_h_post} is quantum-safe.
For each parameter set, the 
parameters are chosen so that the minimal workfactor of the best 
known attack exceeds the security parameter. For classical 
attacks, best known attacks include the works from~\cite{CanCha98,BLP08,AC:FinSen09,EC:BJMM12}
and for quantum attacks, the work of~\cite{Bernstein10}.
We consider $w = \mathcal{O}\left(\sqrt{n}\right)$ and follow the complexity described in~\cite{CTS16}.

Note that our cryptosystem is quite efficient since the decryption simply involves a decoding of a repetition code and a small length BCH code.

\begin{center}
\begin{table}[h]
\begin{center}
\begin{tabular}{cccccccccc}  
\toprule
& \multicolumn{9}{c}{Cryptosystem Parameters} \\
\cmidrule(r){2-10}
~~Instance~~ & $~n_1~$ & $~n_2~$ & $n_1n_2 = n~$ & $~k~$ & $~\delta~$ & $~w~$ & $~\epsilon=3w~$ & ~security~ & $~ p_\textnormal{fail}~$ \\
\midrule
Toy & $~255~$ & $~25~$ & $~6,379~$ & $~63~$ & $~30~$ & $~36~$ & $~108~$ & $~64~$ & $<2^{-64}$\\
Low & $~255~$ & $~37~$ & $~9,437~$ & $~79~$ & $~27~$ & $~45~$ & $~135~$ & $~80~$ & $<2^{-80}$\\
Medium & $~255~$ & $~53~$ & $~13,523~$ & $~99~$ & $~23~$ & $~56~$ & $~168~$ & $~100~$ & $<2^{-100}$\\
Strong & $~511~$ & $~41~$ & $~20,959~$ & $~121~$ & $~58~$ & $~72~$ & $~216~$ & $~128~$ & $<2^{-128}$ \\
\bottomrule \\[1mm]
\end{tabular}
\caption{\label{tab:params_h_pre}Parameter sets for our cryptosystem in Hamming metric. The tensor product code used is $\mathcal{C} = \textnormal{BCH}(n_1, k, \delta) \otimes \mathds{1}_{n_2}$. The parameters for the BCH codes were taken from~\cite{codeBookPetWel72}. Security in the first four instances is given in bits, in the classical model of computing. In the last four instances, the security level is the equivalent of the classical security level but in the quantum computing model, following the work of~\cite{Bernstein10}. The public key size, consisting of $(\mathbf{q}_r, \mathbf{x}+\mathbf{q}_r\cdot \mathbf{y})$, has size $2n$ (in bits) (although considering
a seed for $\mathbf{q}_r$ the size can be reduced to $n$ plus the size of the seed), and the secret key (consisting of $\mathbf{x}$ and $\mathbf{y}$ both of weight $w$) has size $2w\lceil\log_2(n)\rceil$ (bits) - which again can be reduced to the size of a seed. Finally, the size of
the encrypted message is $2n$.}
\end{center}
\end{table}
\end{center}

\bigskip\noindent{\bf Specific structural attacks. } Quasi-cyclic codes have a special structure
which may potentially open the door to specific structural attacks. Such attacks have been studied in \cite{GJL15,LJK+16,S11}, these attacks are especially efficient in the case when the polynomial $x^n-1$ has many small factors. These attacks become inefficient as soon as $x^n-1$ has only two
factors of the form $(x-1)$ and $x^{n-1}+x^{n-2}+...+x+1$, which is the case when $n$ is primitif in $\mathbb{F}_q$, for $q=2$ it corresponds to cases when $2$ generates $\left(\mathbb{Z}/n\mathbb{Z}\right)^*$, such numbers are known up to very large values. We consider such $n$ for our parameters.
\medskip
In Tab.~\ref{tab:params_h_pre} and~\ref{tab:params_h_post}, $n_1$ denotes the length of the BCH code, $n_2$ the length of the repetition code $\mathds{1}$ so that the length of the tensor product code $\mathcal{C}$ is $n = n_1n_2$ (actually the smallest primitive prime greater than $n_1n_2$). $k$ is the dimension of the BCH code and hence also the dimension of $\mathcal{C}$. $\delta$ is the decoding capability of the BCH code, \emph{i.e.} the maximum number of errors that the BCH can decode. $w$ is the weight of the $n$-dimensional vectors $\mathbf{x}$, $\mathbf{y}$, $\mathbf{r}_1$, and $\mathbf{r}_2$ and similarly $\epsilon = \omega(\bm{\epsilon}) = 3\times w$ for our cryptosystem.

\begin{center}
\begin{table}
\begin{center}
\begin{tabular}{cccccccccc}  
\toprule
& \multicolumn{9}{c}{Cryptosystem Parameters} \\
\cmidrule(r){2-10}
~~Instance~~ & $~n_1~$ & $~n_2~$ & $n_1n_2 = n~$ & $~k~$ & $~\delta~$ & $~w~$ & $~\epsilon=3w~$ & ~security~ & $~ p_\textnormal{fail}~$ \\
\midrule
Toy & $~255~$ & $~65~$ & $~16,603~$ & $~63~$ & $~87~$ & $~72~$ & $~216~$ & $~64~$ & $<2^{-64}$\\
Low & $~511~$ & $~47~$ & $~24,019~$ & $~76~$ & $~85~$ & $~89~$ & $~267~$ & $~80~$ & $<2^{-80}$\\
Medium & $~255~$ & $~141~$ & $~35,963~$ & $~99~$ & $~23~$ & $~112~$ & $~336~$ & $~100~$ & $<2^{-100}$\\
Strong & $~511~$ & $~109~$ & $~55,711~$ & $~121~$ & $~58~$ & $~143~$ & $~429~$ & $~128~$ & $<2^{-128}$ \\
\bottomrule \\ 
\end{tabular}
\caption{\label{tab:params_h_post}Parameters for quantum-safe HQC. All parameters are similar to Tab.~\ref{tab:params_h_pre}.}
\end{center}
\end{table}
\end{center}

\bigskip\noindent{\bf Computational Cost. } The most expensive part of the encryption and decryption is the matrix vector product, in practice
the complexity is hence $\mathcal{O}(n^{\frac{3}{2}})$ (for $w = \mathcal{O}(\sqrt{n})$). Asymptotically the cost becomes linear in $n$.

Notice that it would be possible to consider other types of decodable codes in order to increase the encryption rate to $1/4$ (say),
but at the cost of an increase of the length of the code, for instance using LDPC (3,6) codes would increase the rate,
but multiply the length by a factor of roughly three.

\subsection{RQC Instantiation for Rank Metric}\label{sec:7-2-rank}
\noindent{\bf Error distribution and decoding algorithm: no decryption failure. } The case of the rank metric is much more simpler than for Hamming metric.
Indeed in that case the decryption algorithm of our cryptosystem asks to decode an error 
$\mathbf{e}=\mathbf{x\cdot r}_2-\mathbf{r}_1\cdot \mathbf{y}+\bm{\epsilon}$ where the words $(\mathbf{x,y})$ and $(\mathbf{r}_1,\mathbf{r}_2)$ have rank
weight $w$. At the difference of Hamming metric the rank weight of the vector
$\mathbf{x\cdot r}_2-\mathbf{r}_1\cdot \mathbf{y}$ is almost always $w^2$ and is in any case bounded above by $w^2$.
In particular with a strong probability the rank weight of $\mathbf{x\cdot r}_2-\mathbf{r}_1\cdot \mathbf{y}$ is
the same than the rank weight of $\mathbf{x\cdot r}_2$ since $\mathbf{x}$ and $\mathbf{y}$ 
share the same rank support, so as $\mathbf{r}_1$ and $\mathbf{r}_2$.
Hence for decoding, we consider Gabidulin $[n,k]$ codes over $\mathbb{F}_{q^{n}}$, which
can decode $\frac{n-k}{2}$ rank errors and choose our parameters such
that $w^2+\epsilon \le \frac{n-k}{2}$, so that, unlike the Hamming
metric case, there is no decryption failure.

\bigskip\noindent{\bf Parameters for Our Scheme. } In Tab.~\ref{tab:params_r_pre} and~\ref{tab:params_r_post}, $n$ denotes the length of the Rank metric code, $k$ its dimension, $q$ is the number of elements in the base field $\mathbb{F}_q$, and $m$ is the degree of the extension. Similarly to the Hamming instantiation, $w$ is the rank weight of vectors $\mathbf{x}$, $\mathbf{y}$, $\mathbf{r}_1$, and $\mathbf{r}_2$, and $\epsilon$ the rank weight of $\bm{\epsilon}$. 

\bigskip\noindent{\bf Specific structural attacks. } Specific attacks were described in \cite{HT15,AFRICACRYPT:GRSZ14} for LRPC cyclic codes. These attack
use the fact that the targeted code has a generator matrix formed from shifted low weight codewords and in the case
of \cite{HT15}, also uses multi-factor factorization of $x^n-1$. These attack corresponds to searching for low weight codewords of a given code of rate 1/2. In the present case the attacker has to search for a low weight word associated
to a non null syndrom, such that previous attacks imply considering a code with a larger dimension so that 
in practice these attacks do no improve on direct attacks on the syndrome.
Meanwhile in practice by default, we choose $n$ a primitive prime number, such that the polynomial $x^n-1$ has no factor of degree less than $\frac{n-1}{2}$ except $x-1$. The best attacks consists in decoding a random double-circulant
$[2n,n]$ over $\mathbb{F}_{q^m}$ for rank weight $\omega$. 

\bigskip

Examples of parameters are given in Tab.~\ref{tab:params_r_pre}
according to best known attacks (combinatorial attacks in practice) described in Sec.~\ref{sec:2-4-attacks}.
Quantum-safe parameters for RQC are given in Tab.~\ref{tab:params_r_post}. 
For the case of rank metric, we always consider $n'=n=m$.
%
%
%

\begin{center}
\begin{table}
\begin{center}
\begin{tabular}{r@{\hspace{.20cm}}ccccccccc}  
\toprule
\multicolumn{10}{c}{Cryptosystem Parameters} \\
\cmidrule(r){2-10}
Instance & $~n~$ & $~k~$ & $~m~$ & $~q~$ & $~w~$ & $~\epsilon~$ & ~plaintext~ & ~key size~ & ~security~\\
\midrule
RQC-I & $~53~$ & $~13~$ & $~53~$ & $~2~$ & $~4~$ & $~4~$ & $~689~$ & $~2,809~$ & $~95~$\\
RQC-II & $~61~$ & $~3~$ & $~61~$ & $~2~$ & $~5~$ & $~4~$ & $~183~$ & $~3,721~$ & $~140~$\\
RQC-III & $~83~$ & $~3~$ & $~83~$ & $~2~$ & $~6~$ & $~4~$ & $~249~$ & $~6,889~$ & $~230~$\\
\bottomrule \\ 
\end{tabular}
\caption{\label{tab:params_r_pre}Parameter sets for RQC: our cryptosystem in Rank metric. The plaintexts, key sizes, and security are expressed in bits.}
\end{center}
\end{table}
\end{center}

\noindent {\bf Remark.} The system is based on cyclic codes, which means considering polynomials modulo $x^n-1$,
interestingly enough, and only in the case of the rank metric, the construction remains valid
when considering not only polynomials modulo $x^n-1$ but also modulo a polynomial with coefficient
in the base field $GF(q)$. Indeed in that case the modulo does not change the rank weight of a codeword.
Such a variation on the scheme may be interesting to avoid potential structural attacks which 
may use the factorization of the quotient polynomial for the considered polynomial ring. 

\begin{center}
\begin{table}
\begin{center}
\begin{tabular}{r@{\hspace{.20cm}}ccccccccc}  
\toprule
\multicolumn{10}{c}{Cryptosystem Parameters} \\
\cmidrule(r){2-10}
Instance & $~n~$ & $~k~$ & $~m~$ & $~q~$ & $~w~$ & $~\epsilon~$ & ~plaintext~ & ~key size~ & ~security~ \\
\midrule
RQC-I & $~61~$ & $~3~$ & $~61~$ & $~2~$ & $~5~$ & $~4~$ & $~183~$ & $~3,721~$ & $~70~$\\
RQC-II & $~83~$ & $~3~$ & $~83~$ & $~2~$ & $~6~$ & $~4~$ & $~249~$ & $~6,889~$ & $~115~$\\
RQC-III & $~61~$ & $~3~$ & $~61~$ & $~4~$ & $~5~$ & $~4~$ & $~366~$ & $~7,442~$ & $~132~$\\
RQC-IV & $~89~$ & $~5~$ & $~89~$ & $~3~$ & $~6~$ & $~6~$ & $~705~$ & $~12,555~$ & $~192~$\\
\bottomrule \\ 
\end{tabular}
\caption{\label{tab:params_r_post}Parameter sets for quantum-safe RQC, with respect to~\cite{GHT16}. Parameters are analog to Tab.~\ref{tab:params_r_pre}.}
\end{center}
\end{table}
\end{center}


\smallskip\noindent{\bf Computational Cost. } The encryption cost corresponds to a matrix-vector product over $\mathbb{F}_{q^m}$, for a multiplication cost
of elements of $\mathbb{F}_{q^m}$ in $m\log(m)\log(\log(m))$, we obtain an encryption complexity in $\mathcal{O}\left(n^2m\log\left(m\right)\log\left(\log\left(m\right)\right)\right)$.
The decryption cost is also a matrix-vector multiplication plus the decoding cost of the Gabidulin codes, both have the complexities
in $\mathcal{O}\left(n^2m\log\left(m\right)\log\left(\log\left(m\right)\right)\right)$.

\subsection{Comparison with Other Code-based Cryptosystems}

In the following we consider the different types of code-based cryptosystems and express different parameters
of the different systems in terms of the security parameters $\lambda$, considering best known attacks of complexity
$2^{\mathcal{O}(w)}$ for decoding a word of weight $w$ for Hamming distance and complexity in $2^{\mathcal{O}(wn)}$
for decoding a word of rank weight $w$ for a code of double-circulant code of length $2n$ for rank metric.
McEliece-Goppa corresponds to the original scheme proposed by McEliece~\cite{ME78} of dimension rate $\frac{1}{2}$. 

\begin{table}
\begin{center}
\begin{tabular}{rlccccc}
\toprule
\multicolumn{2}{c}{\multirow{2}{*}{Cryptosystem}} & Code & Public & Ciphertext & Hidden & Cyclic \\
& & Length & Key Size & Size & Structure & Structure \\
\midrule
Goppa- & \multirow{2}{*}{\!\!\cite{ME78}} & \multirow{2}{*}{$\mathcal{O}\left(\lambda\log\lambda\right)$} & \multirow{2}{*}{$\mathcal{O}\left(\lambda^2\left(\log\lambda\right)^2\right)$} & \multirow{2}{*}{$\mathcal{O}\left(\lambda\log\lambda\right)$} & \multirow{2}{*}{Strong} & \multirow{2}{*}{No} \\
McEliece&&&&&&\\[1.33mm]
MDPC & \cite{MTSB13} &  $\mathcal{O}\left({\lambda}^2\right)$ & $\mathcal{O}\left({\lambda}^2\right)$ & $\mathcal{O}\left({\lambda}^2\right)$ & Weak & Yes\\[1.33mm]
LRPC & \cite{GMRZ13} & $\mathcal{O}\left({\lambda}^{\frac{2}{3}}\right)$ & $\mathcal{O}\left({\lambda}^{\frac{4}{3}}\right)$ & $\mathcal{O}\left({\lambda}^{\frac{4}{3}}\right)$ 
& Weak & Yes\\[1.33mm]
HQC & [Sec.~\ref{sec:7-1-hamming}] &  $\mathcal{O}\left({\lambda}^2\right)$ & $\mathcal{O}\left({\lambda}^2\right)$ & $\mathcal{O}\left({\lambda}^2\right)$ & No & Yes\\[1.33mm]
RQC & [Sec.~\ref{sec:7-2-rank}] &  $\mathcal{O}\left({\lambda}^{\frac{2}{3}}\right)$ & $\mathcal{O}\left({\lambda}^{\frac{4}{3}}\right)$ & $\mathcal{O}\left({\lambda}^{\frac{4}{3}}\right)$  &
No & Yes\\
\bottomrule
\end{tabular}
\end{center}
\caption{\label{tab:paramsComp}Parameters comparison for different code-based cryptosystems with respect to the security parameter $\lambda$}
\end{table}

Tab.~\ref{tab:paramsComp} shows that even if the recent cryptosystem MDPC has a smaller public key
and a weaker hidden structure than the McEliece cryptosystem, the size of the ciphertext remains non negligible.
The HQC benefits from the same type of parameters than the MDPC systems but with no hidden structure at the cost
of a smaller encryption rate. Finally, the table shows the very strong potential of rank metric based cryptosystems,
whose parameters remain rather low compared to MDPC and HQC cryptosystems.  

\section{Conclusion and Future Work}
\label{sec:conclusion}

We have presented an efficient approach for constructing 
code-based cryptosystems. This approach originates in Alekhnovich's blueprint~\cite{FOCS:Alekhnovich03} on random 
matrices. Our construction is generic enough so 
that we provide two instantiations of our cryptosystem: one 
for the Hamming metric (HQC), and one for the Rank metric (RQC). Both 
constructions are pretty efficient and compare favourably 
to previous work, especially for the rank metric setting. Additionally, we provide for the Hamming 
setting an analysis of the error term yielding a 
concrete, precise and easy-to-verify decryption failure.

This analysis was facilitated by the shape 
of the tensor product code, and more complex-to-analyze 
tensor product codes might yield slightly shorter keys and better
efficiency.

However, for such a tensor product code the analysis of the 
decryption failure probability becomes much more tricky, and 
finding suitable upper bounds for it will involve future work.

%
%
%


\bibliographystyle{alpha}

\newcommand{\etalchar}[1]{$^{#1}$}


\begin{thebibliography}{BJMM12}

\bibitem[AD97]{AD97}
Mikl\'os Ajtai and Cynthia Dwork.
A public-key cryptosystem with worst-case/average-case equivalence.
In {\em FOCS} 1997.

\bibitem[AIK07]{C:AppIshKus07}
Benny Applebaum, Yuval Ishai, and Eyal Kushilevitz.
\newblock Cryptography with constant input locality.
\newblock In Alfred Menezes, editor, {\em CRYPTO~2007}, volume 4622 of {\em
  {LNCS}}, pages 92--110. Springer, Heidelberg, August 2007.

\bibitem[Ale03]{FOCS:Alekhnovich03}
Michael Alekhnovich.
\newblock More on average case vs approximation complexity.
\newblock In {\em 44th FOCS}, pages 298--307. {IEEE} Computer Society Press,
  October 2003.

\bibitem[BCC{\etalchar{+}}07]{BCC07}
Julien Bringer, Herv{\'e} Chabanne, G{\'e}rard Cohen, Bruno Kindarji, and
  Gilles Z{\'e}mor.
\newblock Optimal iris fuzzy sketches.
\newblock In {\em Biometrics: Theory, Applications, and Systems, 2007. BTAS
  2007. First IEEE International Conference on}, pages 1--6. IEEE, 2007.

\bibitem[BCGO09]{AFRICACRYPT:BCGO09}
Thierry~P. Berger, Pierre-Louis Cayrel, Philippe Gaborit, and Ayoub Otmani.
\newblock Reducing key length of the {McEliece} cryptosystem.
\newblock In Bart Preneel, editor, {\em AFRICACRYPT 09}, volume 5580 of {\em
  {LNCS}}, pages 77--97. Springer, Heidelberg, June 2009.

\bibitem[Ber10]{Bernstein10}
Daniel~J Bernstein.
\newblock Grover vs. mceliece.
\newblock In {\em Post-Quantum Cryptography}, pages 73--80. Springer, 2010.

\bibitem[BJMM12]{EC:BJMM12}
Anja Becker, Antoine Joux, Alexander May, and Alexander Meurer.
\newblock Decoding random binary linear codes in $2^{n/20}$: How 1 + 1 = 0
  improves information set decoding.
\newblock In David Pointcheval and Thomas Johansson, editors, {\em
  EUROCRYPT~2012}, volume 7237 of {\em {LNCS}}, pages 520--536. Springer,
  Heidelberg, April 2012.

\bibitem[BS{\etalchar{+}}16]{bSBDIRZ16}
Eli Ben-Sasson, Iddo Bentov, Ivan Damgård,  Yuval Ishai, and Noga Ron-Zewi.
On Public Key Encryption from Noisy Codewords. In {\em Public Key
  Cryptography} pages 417-446. 2016.

\bibitem[BLP08]{BLP08}
Daniel~J Bernstein, Tanja Lange, and Christiane Peters.
\newblock Attacking and defending the mceliece cryptosystem.
\newblock In {\em Post-Quantum Cryptography}, pages 31--46. Springer, 2008.

\bibitem[BMvT78]{BMvT78}
Elwyn~R Berlekamp, Robert~J McEliece, and Henk~CA van Tilborg.
\newblock On the inherent intractability of certain coding problems.
\newblock {\em IEEE Transactions on Information Theory}, 24(3):384--386, 1978.

\bibitem[CC98]{CanCha98}
Anne Canteaut and Florent Chabaud.
\newblock A new algorithm for finding minimum weight words in a linear code:
  application to mceliece cryptosystem and to narrow-sense bch codes of length
  511.
\newblock {\em IEEE Transactions on Information Theory}, 44(1):367--378, 1998.

\bibitem[CS16]{CTS16}
Rodolfo {Canto Torres} and Nicolas Sendrier.
\newblock Analysis of information set decoding for a sub-linear error weight.
\newblock In Takagi \cite{DBLP:conf/pqcrypto/2016}, pages 144--161.

\bibitem[DP12]{damgaard2012public}
Ivan Damg{\aa}rd and Sunoo Park.
\newblock Is public-key encryption based on lpn practical?
\newblock {\em IACR Cryptology ePrint Archive}, 2012:699, 2012.

\bibitem[DV13]{DV13}
Alexandre Duc and Serge Vaudenay.
\newblock Helen: a public-key cryptosystem based on the lpn and the decisional
  minimal distance problems.
\newblock In {\em International Conference on Cryptology in Africa}, pages
  107--126. Springer, 2013.

\bibitem[FdVP08]{C:FauLevPer08}
Jean-Charles Faug{\`e}re, Fran\c{c}oise~Levy dit Vehel, and Ludovic Perret.
\newblock Cryptanalysis of minrank.
\newblock In David Wagner, editor, {\em CRYPTO~2008}, volume 5157 of {\em
  {LNCS}}, pages 280--296. Springer, Heidelberg, August 2008.

\bibitem[FOPT10]{EC:FOPT10}
Jean-Charles Faug{\`e}re, Ayoub Otmani, Ludovic Perret, and Jean-Pierre
  Tillich.
\newblock Algebraic cryptanalysis of {McEliece} variants with compact keys.
\newblock In Gilbert \cite{EC10}, pages 279--298.

\bibitem[FS09]{AC:FinSen09}
Matthieu Finiasz and Nicolas Sendrier.
\newblock Security bounds for the design of code-based cryptosystems.
\newblock In Mitsuru Matsui, editor, {\em ASIACRYPT~2009}, volume 5912 of {\em
  {LNCS}}, pages 88--105. Springer, Heidelberg, December 2009.

\bibitem[Gab85]{G85}
Ernest~Mukhamedovich Gabidulin.
\newblock Theory of codes with maximum rank distance.
\newblock {\em Problemy Peredachi Informatsii}, 21(1):3--16, 1985.

\bibitem[Gab05]{cyclic}
Philippe Gaborit.
\newblock Shorter keys for code based cryptography.
\newblock In {\em Proceedings of the 2005 International Workshop on Coding and
  Cryptography (WCC 2005)}, pages 81--91, 2005.

\bibitem[GG07]{GG07}
Philippe Gaborit and Marc Girault.
\newblock Lightweight code-based identification and signature.
\newblock In {\em 2007 IEEE International Symposium on Information Theory},
  pages 191--195. IEEE, 2007.

\bibitem[GHT16]{GHT16}
Philippe Gaborit, Adrien Hauteville, and Jean{-}Pierre Tillich.
\newblock Ranksynd a {PRNG} based on rank metric.
\newblock In Takagi \cite{DBLP:conf/pqcrypto/2016}, pages 18--28.

\bibitem[Gil10]{EC10}
Henri Gilbert, editor.
\newblock {\em EUROCRYPT~2010}, volume 6110 of {\em {LNCS}}. Springer,
  Heidelberg, May 2010.

\bibitem[GM84]{GolMic84}
Shafi Goldwasser and Silvio Micali.
\newblock Probabilistic encryption.
\newblock {\em Journal of Computer and System Sciences}, 28(2):270--299, 1984.

\bibitem[GMRZ13]{GMRZ13}
Philippe Gaborit, Ga{\'e}tan Murat, Olivier Ruatta, and Gilles Z{\'e}mor.
\newblock Low rank parity check codes and their application to cryptography.
\newblock In {\em Proceedings of the Workshop on Coding and Cryptography
  WCC'2013}, Bergen, Norway, 2013.
\newblock Available on \url{www.selmer.uib.no/WCC2013/pdfs/Gaborit.pdf}.

\bibitem[GPT91]{EC:GabParTre91}
Ernst~M. Gabidulin, A.~V. Paramonov, and O.~V. Tretjakov.
\newblock Ideals over a non-commutative ring and thier applications in
  cryptology.
\newblock In Donald~W. Davies, editor, {\em EUROCRYPT'91}, volume 547 of {\em
  {LNCS}}, pages 482--489. Springer, Heidelberg, April 1991.

\bibitem[GRS16]{GRS16}
Philippe Gaborit, Olivier Ruatta, and Julien Schrek.
\newblock On the complexity of the rank syndrome decoding problem.
\newblock {\em IEEE Transactions on Information Theory}, 62(2):1006--1019,
  2016.

\bibitem[GRSZ14]{AFRICACRYPT:GRSZ14}
Philippe Gaborit, Olivier Ruatta, Julien Schrek, and Gilles Z{\'e}mor.
\newblock New results for rank-based cryptography.
\newblock In David Pointcheval and Damien Vergnaud, editors, {\em AFRICACRYPT
  14}, volume 8469 of {\em {LNCS}}, pages 1--12. Springer, Heidelberg, May
  2014.

\bibitem[GZ16]{GZ14}
Philippe Gaborit and Gilles Z{\'{e}}mor.
\newblock On the hardness of the decoding and the minimum distance problems for
  rank codes.
\newblock IEEE Trans. Information Theory 62(12): 7245-7252 (2016).

  
\bibitem[GJL15]{GJL15}{
Qian Guo and Thomas Johansson and
               Carl L{\"{o}}ndahl,
\newblock A New Algorithm for Solving Ring-LPN With a Reducible Polynomial,
\newblock In IEEE Trans. Information Theory, vol. 61,(11), pp. 6204--6212, (2015)
}

\bibitem[HKL{\etalchar{+}}12]{Lapin12}
Stefan Heyse, Eike Kiltz, Vadim Lyubashevsky, Christof Paar, and Krzysztof
  Pietrzak.
\newblock Lapin: An efficient authentication protocol based on ring-lpn.
\newblock In {\em Fast Software Encryption}, pages 346--365. Springer, 2012.

\bibitem[HPS98]{HPS98}
Jeffrey Hoffstein, Jill Pipher, and Joseph~H. Silverman.
\newblock {NTRU}: A ring-based public key cryptosystem.
\newblock In Joe Buhler, editor, {\em Algorithmic Number Theory, Third
  International Symposium, ANTS-III, Portland, Oregon, USA, June 21-25, 1998,
  Proceedings}, volume 1423, pages 267--288. Springer, 1998.

\bibitem[HT15]{HT15}
Adrien Hauteville and Jean-Pierre Tillich.
\newblock New algorithms for decoding in the rank metric and an attack on the
  lrpc cryptosystem.
\newblock In {\em 2015 IEEE International Symposium on Information Theory
  (ISIT)}, pages 2747--2751. IEEE, 2015.

\bibitem[LJK+16]{LJK+16}{
  Carl L{\"{o}}ndahl and
               Thomas Johansson and
               Masoumeh Koochak Shooshtari and
               Mahmoud Ahmadian{-}Attari and
               Mohammad Reza Aref,
\newblock Squaring attacks on McEliece public-key cryptosystems using quasi-cyclic
               codes of even dimension.
\newblock In {\em Des. Codes Cryptography}, Vol. 80, pp. 359--377,2016.
}
\bibitem[KMP14]{PKC:KilMasPie14}
Eike Kiltz, Daniel Masny, and Krzysztof Pietrzak.
\newblock Simple chosen-ciphertext security from low-noise {LPN}.
\newblock In Hugo Krawczyk, editor, {\em PKC~2014}, volume 8383 of {\em
  {LNCS}}, pages 1--18. Springer, Heidelberg, March 2014.

\bibitem[LdVP06]{LP06}
Fran{\c{c}}oise Levy-dit Vehel and L~Perret.
\newblock Algebraic decoding of rank metric codes.
\newblock {\em Proceedings of YACC}, 2006.

\bibitem[Loi06]{L06}
Pierre Loidreau.
\newblock Properties of codes in rank metric.
\newblock {\em arXiv preprint cs/0610057}, 2006.

\bibitem[LPR10]{EC:LyuPeiReg10}
Vadim Lyubashevsky, Chris Peikert, and Oded Regev.
\newblock On ideal lattices and learning with errors over rings.
\newblock In Gilbert \cite{EC10}, pages 1--23.

\bibitem[MB09]{SAC:MisBar09}
Rafael Misoczki and Paulo S. L.~M. Barreto.
\newblock Compact {McEliece} keys from goppa codes.
\newblock In Michael~J. {Jacobson Jr.}, Vincent Rijmen, and Reihaneh
  Safavi-Naini, editors, {\em SAC 2009}, volume 5867 of {\em {LNCS}}, pages
  376--392. Springer, Heidelberg, August 2009.

\bibitem[McE78]{ME78}
Robert~J McEliece.
\newblock A public-key cryptosystem based on algebraic.
\newblock {\em Coding Thv}, 4244:114--116, 1978.

\bibitem[MTSB13]{MTSB13}
Rafael Misoczki, Jean-Pierre Tillich, Nicolas Sendrier, and Paulo~SLM Barreto.
\newblock Mdpc-mceliece: New mceliece variants from moderate density
  parity-check codes.
\newblock In {\em Information Theory Proceedings (ISIT), 2013 IEEE
  International Symposium on}, pages 2069--2073. IEEE, 2013.

\bibitem[Ove07]{background}
Raphael Overbeck.
\newblock {\em Public key cryptography based on coding theory}.
\newblock PhD thesis, TU Darmstadt, 2007.

\bibitem[PW72]{codeBookPetWel72}
William~Wesley Peterson and Edward~J Weldon.
\newblock {\em Error-correcting codes}.
\newblock MIT press, 1972.

\bibitem[Reg03]{STOC:Regev03}
Oded Regev.
\newblock New lattice based cryptographic constructions.
\newblock In {\em 35th ACM STOC}, pages 407--416. {ACM} Press, June 2003.

\bibitem[RS60]{RS60}
Irving~S Reed and Gustave Solomon.
\newblock Polynomial codes over certain finite fields.
\newblock {\em Journal of the society for industrial and applied mathematics},
  8(2):300--304, 1960.

\bibitem[Sen11]{S11}
Nicolas Sendrier.
\newblock Decoding one out of many.
\newblock In {\em International Workshop on Post-Quantum Cryptography}, pages
  51--67. Springer, 2011.

\bibitem[SKK10]{SKK10}
Danilo Silva, Frank~R Kschischang, and Ralf Kotter.
\newblock Communication over finite-field matrix channels.
\newblock {\em IEEE Transactions on Information Theory}, 56(3):1296--1305,
  2010.

\bibitem[Tak16]{DBLP:conf/pqcrypto/2016}
Tsuyoshi Takagi, editor.
\newblock {\em Post-Quantum Cryptography - 7th International Workshop, PQCrypto
  2016, Fukuoka, Japan, February 24-26, 2016, Proceedings}, volume 9606 of {\em
  Lecture Notes in Computer Science}. Springer, 2016.

\end{thebibliography}


%

\end{document}